\documentclass{article}
\usepackage{graphicx} 
\usepackage{fullpage}

\usepackage{mathpazo}
\usepackage{diagbox}

\usepackage[T1]{fontenc}

\usepackage{amsmath,amsfonts,amssymb}
\usepackage{mathtools}
\usepackage{amsthm} 

\usepackage{graphicx}
\usepackage{subcaption} 

\usepackage{circuitikz}
\usepackage{tikz}
\usetikzlibrary{calc}

\usepackage{physics}
\usepackage{bm} 

\usepackage{enumitem}
\usepackage{multicol}

\usepackage{bbm}
\usepackage{dsfont}

\usepackage{nicefrac}
\usepackage{float} 

\usepackage{todonotes}
\usepackage{mdframed}
\usepackage{longfbox}
\usepackage{lipsum}
\usepackage[hidelinks]{hyperref}
\usepackage{cleveref}

\usepackage{longtable}
\usepackage{array}

\newcommand{\N}{\mathbb{N}}

\newcommand{\R}{\mathbb{R}}

\newtheorem{theorem}{Theorem}[section]
\newtheorem{proposition}[theorem]{Proposition}
\newtheorem{definition}[theorem]{Definition}
\newtheorem{lemma}[theorem]{Lemma}
\newtheorem{claim}[theorem]{Claim}

\newtheorem{remark}[theorem]{Remark}

\newtheorem{fact}[theorem]{Fact}

\newcommand{\secparam}{n}

\renewcommand{\bra}[1]{\langle#1\rvert}
\renewcommand{\braket}[2]{\langle #1 \mid #2 \rangle}
\renewcommand{\ket}[1]{\lvert#1\rangle}

\newcommand{\cK}{{\mathcal K}}

\newcommand{\bfG}{\mathbf{G}}

\newcommand{\sfD}{\mathsf{D}}

\newcommand{\sfF}{\mathsf{F}}

\newcommand{\sfR}{\mathsf{R}}

\newcommand{\reg}[1]{{\color{gray} \mathbf{#1}}}
\newcommand{\regA}{{\color{gray} \mathbf{A}}}
\newcommand{\regB}{{\color{gray} \mathbf{B}}}
\newcommand{\regC}{{\color{gray} \mathbf{C}}}

\newcommand{\regE}{{\color{gray} \mathbf{E}}}

\newcommand{\regR}{{\color{gray} \mathbf{R}}}
\newcommand{\regS}{{\color{gray} \mathbf{S}}}

\newcommand{\regZ}{{\color{gray} \mathbf{Z}}}

\newcommand{\keygen}{\mathsf{Gen}}
\newcommand{\enc}{\mathsf{Enc}}
\newcommand{\dec}{\mathsf{Dec}}
\newcommand{\key}{{\sf k}}
\newcommand{\ct}{\mathsf{ct}}
\newcommand{\linear}{{\cal L}}
\newcommand{\density}{{\cal D}}
\newcommand{\dimension}{{\sf dim}}

\newcommand{\alice}{{\cal A}}
\newcommand{\bob}{{\cal B}}
\newcommand{\charlie}{{\cal C}}
\newcommand{\expect}{\mathbb{E}}

\usepackage[backend=biber,style=alphabetic,maxnames=20]{biblatex}
\newcommand{\prob}{{\sf Pr}}
\newcommand{\negl}{\operatorname{negl}}
\newcommand{\qpt}{\mathsf{QPT}}

\title{Unconditional Unclonable Encryption}
\date{July 22, 2026}
\author{Prabhanjan Ananth\footnote{\texttt{prabhanjan@cs.ucsb.edu}}\\ {\small UCSB} \and Amit Sahai\footnote{\texttt{sahai@cs.ucla.edu}}\\{\small UCLA}} 

\date{}

\begin{document}

\maketitle

\begin{abstract}
\noindent We give an unconditional construction of information-theoretically secure one-time private-key unclonable encryption scheme for one-bit messages, with efficient encryption and decryption and exponentially small unclonable-indistinguishability advantage.
\end{abstract}

\section{Introduction}

\noindent Unclonable cryptography is a branch of quantum cryptography that leverages the no-cloning principle of quantum mechanics~\cite{WoottersZurek82,Dieks82,BroadbentLord20} to design primitives that are impossible to achieve classically. Within the rapidly growing landscape of unclonable cryptography, unclonable encryption has emerged as an important primitive. Roughly speaking, unclonable encryption (UE)~\cite{Gottesman03,BroadbentLord20} is an encryption scheme whose ciphertexts cannot be copied in a way that remains useful after the decryption key is revealed.  In the one-time setting, an adversary receives a single copy of a quantum ciphertext while the key is hidden and then splits it between two non-communicating recipients.  The key is then revealed to both the recipients, and the security requires that they cannot both simultaneously recover the hidden message substantially better than a coordinated guess. 

In more detail, two notions of security have been studied for unclonable encryption:
\begin{itemize}
    \item {\bf Search Security}: the challenger samples a message $m$ uniformly at random from $\{0,1\}^{\secparam}$ and sends an encryption of $m$ to $\alice$ who then produces a bipartite state shared with $\bob$ and $\charlie$. Upon receiving the decryption key, both $\bob$ and $\charlie$ respectively output $m_{\bob}$ and $m_{\charlie}$. $(\alice,\bob,\charlie)$ succeed if $m_{\bob}=m$ and $m_{\charlie}=m$. The work of~\cite{BroadbentLord20} presented a scheme where they showed that the probability of any adversary $(\alice,\bob,\charlie)$ succeeding is  at most $\cos^{2|m|}(\frac{\pi}{8})$. 
    \item {\bf Indistinguishability Security}: the challenger samples a bit $b$ at random and then encrypts a message $m_b$, for two adversarially chosen messages $(m_0,m_1)$, and then sends the ciphertext to $\alice$. From here on, the experiment is the same as the search security experiment. Unlike search security, there is always an adversarial strategy that can win with probability $\frac{1}{2}$. The goal is to design an UE scheme satisfying indistinguishability security where the success probability is as close to $\frac{1}{2}$ as possible.   
\end{itemize}

\paragraph{Prior constructions.} Over the past few years, many works have studied indistinguishability security for unclonable encryption. We use ${\sf indUE}$ to denote unclonable encryption satisfying indistinguishability security. 

\begin{itemize}
    \item {\bf Plain Model}:~\cite{AnanthKaleogluYuen25} achieved ${\sf indUE}$ where the decryption keys are quantum states. Recently, there has been progress in designing ${\sf indUE}$ with classical keys.~\cite{BotteronEtAl26} achieved ${\sf indUE}$ with advantage $\frac{1}{2\sqrt{K}}$ and verified its validity upto $K=17$. This was improved by~\cite{BhattacharyyaCulf26}, who achieved ${\sf indUE}$ with inverse polynomial advantage. A follow up work by~\cite{BhattacharyyaBroadbentCulf26} presents an optimal construction, where the advantage is negligible. However, their construction is not efficient, meaning that the encryption and the decryption algorithms do not run in quantum polynomial time\footnote{Concretely, their scheme has advantage negligible in $\secparam$, where  encryption and decryption can be implemented in time  exponential in $\secparam$.}. 
    \item {\bf Idealized Models}:~\cite{AnanthEtAl22,AnanthKaleogluLiu23} presented  ${\sf indUE}$ with optimal advantage in the quantum random oracle model.~\cite{BartusekGoldin2026} also presented optimal ${\sf indUE}$ and with stronger security guarantees albeit in the Haar random oracle model. 
\end{itemize}
\noindent Despite the recent progress, the goal of achieving efficient ${\sf indUE}$ in the plain model with classical keys and negligible adversary advantage has remained elusive. Indeed, a few works present evidence highlighting the challenges in achieving indistinguishability security. Notably~\cite{MajenzSchaffnerTahmasbi2021,AnanthEtAl22} showed that ${\sf indUE}$ with deterministic encryption cannot satisfy negligible security.~\cite{ColadangeloLiuXie26} also  highlighted the difficulties in using BB84 states to achieve ${\sf indUE}$.  

\paragraph{Our contribution.}
We address the main goal of achieving unclonable encryption satisfying indistinguishability security. We show: 

\begin{theorem}[Informal]
For every $n\geq 1$, there exists a one-time private-key quantum encryption scheme for one-bit messages with a $(2n-1)$-bit classical key and an $n$-qubit ciphertext.  The scheme has perfect correctness, uses only single-qubit Clifford gates for encryption and local Pauli measurements for decryption, and satisfies
\[
\prob[b_{\mathcal B}=b_{\mathcal C}=b]
\leq
\frac12+2^{-(n+1)/2}
\]
against every information-theoretic pre-key splitting attack in the one-bit unclonable-indistinguishability experiment.
\end{theorem}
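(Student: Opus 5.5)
The plan has three parts: fix a concrete scheme, translate the security game into a norm bound on an operator built from the key-dependent projectors, and prove that norm bound with an overlap lemma in the style of Tomamichel--Fehr--Kaniewski--Wehner (TFKW). \emph{Caveat:} I have only checked that the averaging in the last step gives an exponentially small advantage. I have not verified that it reaches the exact constant $2^{-(n+1)/2}$.

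\textbf{The scheme.} The bit $b$ will be encoded in a \emph{global} stabilizer of an $n$-qubit GHZ-type state, so that every proper subset of the qubits carries no information about $b$. The key is $k=(\theta,s)$ with $\theta\in\{0,1\}^n$ and $s\in\{0,1\}^{n-1}$, which accounts for the $(2n-1)$ bits. The ciphertext is $\ket{\psi_{k,b}} = H^{\theta}\ket{G_{s,b}}$, where $\ket{G_{s,b}}$ is the state stabilized by
\[
(-1)^b X^{\otimes n}, \qquad (-1)^{s_i} Z_iZ_{i+1} \quad (i<n).
\]
Concretely, $\ket{G_{s,b}}$ is a phase-and-bit-flipped GHZ state. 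Applying $H^{\theta}$ amounts to a layer of single-qubit Cliffords. To decrypt, apply $H^{\theta}$, measure every qubit in the $X$ basis, and output the parity. This gives perfect correctness because $X^{\otimes n}$ has eigenvalue $(-1)^b$.

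I considered and rejected the simpler choice of BB84 bits $x$ with fixed parity $b$. There the key reveals $x_1,\dots,x_{n-1}$, so the game collapses to a single-qubit BB84 game with success probability $\cos^2(\pi/8)$, which is too large once $n\ge 3$.

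\textbf{Reduction to a norm bound.} Purify the game: $\alice$ holds the complement of the ciphertext, which is $\sum_b\ket{b}\ket{\psi_{k,b}}/\sqrt2$, or equivalently the challenger's half of $n$ EPR pairs, measured later. Given the key, let $B^k_b$ and $C^k_b$ be the measurements of $\bob$ and $\charlie$. The winning probability is then
\[
\expect_k\,\big\langle \Pi_k\big\rangle_\rho,
\qquad
\Pi_k=\sum_b \overline{\ket{\psi_{k,b}}\!\bra{\psi_{k,b}}}\otimes B^k_b\otimes C^k_b ,
\]
evaluated on some fixed state $\rho$ chosen by $\alice$. So it suffices to bound $\big\|\expect_k \Pi_k\big\|$.

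First I will average over the sign bits $s$ for fixed $\theta$. The $Z_iZ_{i+1}$ constraints are then twirled away, and the relevant operator on the challenger side becomes
\[
\tfrac12\bigl(I+(-1)^b P_\theta\bigr)\cdot 2^{-(n-1)}\text{-scaled},
\qquad P_\theta=H^{\theta}X^{\otimes n}H^{\theta},
\]
which is a Pauli string containing $X$ in the positions with $\theta_i=0$ and $Z$ in the positions with $\theta_i=1$. After this averaging, the problem is a monogamy game over $2^n$ ``bases'' $\theta$. In basis $\theta$ the challenger measures the single global observable $P_\theta$, and $\bob$ and $\charlie$ must both guess its outcome.

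\textbf{The overlap bound (main obstacle).} Next I apply the TFKW lemma:
\[
\Big\|\sum_\theta \Pi_\theta\Big\| \le \sum_{j}\max_\theta \big\|\Pi_\theta\Pi_{\pi_j(\theta)}\big\|
\]
for a family of mutually orthogonal permutations $\pi_j(\theta)=\theta\oplus j$. Bounding $\|\Pi_\theta\Pi_{\theta'}\|$ only through the challenger's factor gives an estimate governed by how $P_\theta$ and $P_{\theta'}$ interact. If $\theta$ and $\theta'$ differ in $d$ positions, the two strings commute or anticommute according to the parity of $d$. In the anticommuting case the two half-space projectors have overlap $1/\sqrt2$. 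In the commuting case, for $d\ge1$, I would instead exploit that $P_\theta P_{\theta'}$ is a nontrivial Pauli string. Combining this with the $\bob$ and $\charlie$ factors, in the style of the $\cos^2(\pi/8)$ argument, should give $\tfrac12+2^{-(n+1)/2}$ after summing over $j$.

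This step is the one I expect to be hardest. A naive TFKW sum gives only something like $\bigl(\tfrac{1+1/\sqrt2}{2}\bigr)^n$ plus a $\tfrac12$ term. The exact constant will most likely need a sharper, Jordan-lemma-style analysis that splits the $\theta$-average into even-weight and odd-weight differences. The fallback is to prove the advantage form directly: show
\[
\expect_\theta \big\langle P_\theta\otimes \hat B_\theta\rangle \le 2^{-(n-1)/2},
\]
using anticommutation, where $\hat B_\theta=B^\theta_0-B^\theta_1$, and then combine it with the $\bob$--$\charlie$ correlation term.
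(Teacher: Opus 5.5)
\textbf{Your scheme is insecure, so the gap lies in the construction, not in the overlap estimate.} You noted that $P_\theta$ and $P_{\theta'}$ commute exactly when $|\theta\oplus\theta'|$ is even, so the $2^{n-1}$ operators $P_\theta$ with $|\theta|$ even pairwise commute. This gives an attack. Let $\alice$ measure the ciphertext in a common eigenbasis of this family and send the outcome, together with a shared random bit $c$, to both $\bob$ and $\charlie$. Since $H^{\theta}\ket{G_{s,b}}$ is an eigenvector of $P_\theta$ with eigenvalue $(-1)^b$, the observed basis vector always has $P_\theta$-eigenvalue $(-1)^b$. So whenever $|\theta|$ is even, both recipients recover $b$ with certainty once they learn the key; otherwise both output $c$. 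This wins with probability at least $\frac12\cdot1+\frac12\cdot\frac12=\frac34$ for every $n$, which exceeds $\frac12+2^{-(n+1)/2}$ once $n\geq4$.

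Averaged over $s$, your ciphertext is exactly the BB84-parity state $\frac1d(I+(-1)^bP_\theta)$ with $P_\theta\in\{X,Z\}^{\otimes n}$. That is the scheme you rejected, and the paper notes it cannot give exponentially small error. Putting $s$ into the key only helps $\bob$ and $\charlie$. For the same reason, averaging $s$ away on the challenger side is not a valid step for an upper bound, because the decoders may depend on $s$.

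The obstruction also shows up in your analysis. The TFKW sum contains $2^{n-1}$ even-weight shifts $j$ for which the challenger projectors commute and have overlap $1$. Your fallback inequality $\expect_\theta\langle P_\theta\otimes\hat B_\theta\rangle\leq2^{-(n-1)/2}$ is false outright: $\alice$ can hand the whole ciphertext to $\bob$, who then attains correlation $1$ by perfect correctness. Finally, preparing a GHZ-type state needs entangling gates, which violates the single-qubit-Clifford clause of the statement.

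\textbf{What is missing.} The key must select among far more than $d=2^n$ pairwise Hilbert--Schmidt-orthogonal Paulis, and the eigenvalue bits must be private encryption randomness rather than key. In the paper, keys $(x,z)$ with $x_1=1$ index all $L=2^{2n-1}$ Paulis in $\{I,X,Y,Z\}^{\otimes n}$ whose first factor is $X$ or $Y$. The ciphertext is the product state $U_k\ket{r_1\cdots r_n}$ with $r_2,\dots,r_n$ private, so $\rho^{(k,m)}_{\ct}=\frac1d(I+(-1)^mP_{x,z})$. Via the Choi state $\tau$ of $\alice$'s channel, the winning probability equals $\frac12(1+\Tr(\bfG\tau))$ with $\bfG=\frac12(\mathsf E_B+\mathsf E_C+\mathsf E_{BC}-I)$.

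The bound then comes in three steps. Orthogonality of the $P_{x,z}$ and a Cauchy--Schwarz conditional-overlap lemma give $\|\mathsf E_B\mathsf E_C\|_\infty\leq d/L=2/d$. This bound survives inserting powers $\mathsf E_{BC}^{\ell}$ between the two factors. For a top eigenvector $\ket{\psi}$ of $\bfG$ with eigenvalue $t$, the filter $\sfF=2t(2tI+I-\mathsf E_{BC})^{-1}$ and the order relation $-(I-\mathsf E_{BC})\leq\mathsf E_B-\mathsf E_C\leq I-\mathsf E_{BC}$ give $t^2\leq\operatorname{Re}\bra{\psi}\mathsf E_B\sfF\mathsf E_C\ket{\psi}\leq2/d$. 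That is exactly $\frac12+2^{-(n+1)/2}$.

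The ratio $d/L$ is what your family lacks: for $\{X,Z\}^{\otimes n}$ one has $L=d$, and the same estimate is vacuous.
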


\subsection{Technical Overview}

\paragraph{From BB84 parity to Pauli parity.} It is useful first to recall the natural BB84-based attempt.  For each qubit, the encryptor chooses either the $X$ or the $Z$ observable and prepares one of its two eigenstates.  The eigenvalue bits on all but one position are chosen at random, while the remaining position is fixed so that the XOR of all eigenvalue bits equals the message.  A legitimate decryptor, after learning the basis string, measures every qubit and recovers the message by taking the parity of the outcomes.  This construction has a clean search-security interpretation, but the XOR-repetition result of Coladangelo, Liu, and Xie shows that this direct route cannot provide exponentially small indistinguishability error~\cite{ColadangeloLiuXie26}.

Our construction replaces the two-basis choice by a random tensor Pauli.  The key consists of strings $x,z\in\{0,1\}^n$ subject to $x_1=1$, and the $j$th local observable is
\[
P_{x_j,z_j}=i^{x_jz_j}X^{x_j}Z^{z_j}\in\{I,X,Y,Z\}.
\]
Thus the first local Pauli is $X$ or $Y$, and in particular is nonidentity.  The encryptor samples the bits $r_2,\ldots,r_n$ uniformly and chooses $r_1$ so that the product of the local Pauli eigenvalues equals $(-1)^m$.  It then prepares the product state
\[
\ket{\psi_{k,m;r}}
=
U_k\ket{r_1,\ldots,r_n},
\]
where $U_k$ rotates the computational basis into the appropriate local Pauli eigenbases.  After learning the key, the decryptor measures the $j$th qubit with respect to $P_{x_j,z_j}$ and multiplies the resulting signs.  The product is exactly $(-1)^m$.
\par Our construction is not new and was first conceived by~\cite{BotteronEtAl26}, although we give a more detailed description of their scheme in this work. We show that the construction of~\cite{BotteronEtAl26} satisfies indistinguishability security with negligible advantage. 

\paragraph{Security.}
The security proof has two main steps.  First, the Choi--Jamio\l{}kowski representation converts an arbitrary pre-key splitting channel into a fixed tripartite state and rewrites the winning probability as\footnote{This step is similar to the first step of search security of unclonable encryption by~\cite{BroadbentLord20} once the intermediate abstraction of monogamy of entanglement games is stripped out. Similar to~\cite{BroadbentLord20}, arguing security of unclonable encryption reduces to determining an upper bound on the operator norm of a matrix. }
\[
\prob[\mathsf{Win}]
=
\frac12\left(1+\operatorname{Tr}(\mathbf G\tau)\right),
\qquad
\mathbf G
=
\frac12\left(\mathsf E_B+\mathsf E_C+\mathsf E_{BC}-I\right).
\]
Here $\mathsf E_B$ is the averaged correlation between the reference system and Bob's signed decoder observable, $\mathsf E_C$ is the analogous reference--Charlie correlation, and $\mathsf E_{BC}$ records Bob--Charlie agreement. The reduction shows that it suffices to upper-bound the positive spectrum of $\mathbf G$:
\begin{eqnarray}
\prob[\mathsf{Win}]
\leq
\frac12\left(1+\|\mathbf G_+\|_{\infty}\right).
\label{eqn:techoverview:winprob}
\end{eqnarray}
The second step is therefore to prove an explicit upper bound on $\|\mathbf G_+\|_{\infty}$.  The remainder of the proof develops this operator estimate from the Hilbert--Schmidt orthogonality of the Pauli family and a filtered-overlap argument. 
\par Suppose $\| \mathbf G_{+} \|_{\infty}=t$. We use the fact that ${\mathbf G}$ is Hermitian to show that there exists a unit vector $\ket{\psi}$ such that ${\mathbf G}\ket{\psi} = t \ket{\psi}$. By rearranging, we have the following equation:
$$(\mathsf{E}_{B} + \mathsf{E}_C) \ket{\psi} = (2tI + (I - \mathsf{E}_{BC})) \ket{\psi} $$
\noindent The key is to come up with a well-defined filter $\mathsf{F}=2t(2tI + (I - \mathsf{E}_{BC}))^{-1}$ and consider the following quadratic forms: 
\begin{itemize}
    \item $\bra{\psi} (\mathsf{E}_B + \mathsf{E}_C) \mathsf{F} (\mathsf{E}_{B} + \mathsf{E}_{C}) \ket{\psi}$
    \item $-\bra{\psi} (\mathsf{E}_B - \mathsf{E}_C) \mathsf{F} (\mathsf{E}_{B} - \mathsf{E}_{C}) \ket{\psi}$
\end{itemize}
Adding both gives us the result $4\mathsf{Re} \bra{\psi} \mathsf{E}_B \mathsf{F} \mathsf{E}_C \ket{\psi}$. By suitably lower bounding both the quadratic forms, we can then show that $\mathsf{Re} \bra{\psi} \mathsf{E}_B \mathsf{F} \mathsf{E}_C \ket{\psi} \geq t^2$. Observe that $\mathsf{Re} \bra{\psi} \mathsf{E}_B \mathsf{F} \mathsf{E}_C \ket{\psi} \leq \| \mathsf{E}_B \mathsf{F} \mathsf{E}_C \|_{\infty}$. Suppose we show that $\| \mathsf{E}_B \mathsf{F} \mathsf{E}_C \|_{\infty} \leq \frac{1}{2^{n-1}}$ then this would imply that $t \leq \frac{1}{2^{\frac{n-1}{2}}}$. Substituting this in~\Cref{eqn:techoverview:winprob}, we then would have the claimed result: $\prob[\mathsf{Win}] \leq \frac{1}{2} + \frac{1}{2^{\frac{n+1}{2}}}$.  
\par Thus, the remaining part is to show that $\| \mathsf{E}_B \mathsf{F} \mathsf{E}_C \|_{\infty} \leq \frac{1}{2^{n-1}}$. Instead of calculating $\| \mathsf{E}_B \mathsf{F} \mathsf{E}_C \|_{\infty}$ directly, we instead introduce another operator $\mathsf{F}_{N_0}$, defined for an integer $N_{0} \geq 1$ and then show that $\| \mathsf{E}_B \mathsf{F}_{N_0}\mathsf{E}_C \|_{\infty} + \| \mathsf{E}_B \mathsf(\mathsf{F}-\mathsf{F}_{N_0})\mathsf{E}_C \|_{\infty}$ is at most $\frac{1}{2^{n-1}}$. The advantage of introducing $\mathsf{F}_{N_0}$ is that it is relatively easier to upper bound $\| \mathsf{E}_B \mathsf{F}_{N_0}\mathsf{E}_C \|_{\infty}$ and in particular, upper bounding $\| \mathsf{E}_B \mathsf{F}_{N_0}\mathsf{E}_C \|_{\infty}$ boils down to upper bounding $\| \mathsf{E}_B \mathsf{E}_{BC}^{\ell} \mathsf{E}_C \|_{\infty}$, for every $\ell \in [N_0-1]$. In turn, $\| \mathsf{E}_B \mathsf{E}_{BC}^{\ell} \mathsf{E}_C \|_{\infty}$ can be upper bounded by 
$\| \mathsf{E}_B \mathsf{E}_C \|_{\infty}$, which is at most $\frac{1}{2^{n-1}}$.

\paragraph{Comparison with~\cite{BotteronEtAl26}.} We compare our analysis with~\cite{BotteronEtAl26}. All the steps up until determining the upper bound on $\mathsf{G}$ is essentially the same. The difference is that we define $\mathsf{G}=\frac{1}{2}\left( \mathsf{E}_B + \mathsf{E}_C + \mathsf{E}_{BC} - I \right)$ whereas they consider the matrix\footnote{Technically speaking, they omit the normalization factor in the sum. We ignore this distinction in the current discussion.} $\mathsf{E}_B + \mathsf{E}_C + \mathsf{E}_{BC}$ and conjecture an upper bound on the operator norm of this matrix. 

\paragraph{Statement on AI usage.}
The human authors take full responsibility for the claims and proofs contained in this paper, and have carefully refined and verified them. The construction and main ideas of the proof were generated entirely by Codex using GPT 5.6 Sol Ultra, using harness ideas generated by the authors based on the UCLA Moonshot Harness~\cite{MoonshotHarness} and~\cite{OpenAICDCPrompt}.

\section{Preliminaries}
\noindent The statistical security parameter is denoted by $\secparam\in\N$. We write
$x\xleftarrow{\$}S$ for uniform sampling from a finite set $S$. A function
$\mu:\N\rightarrow\R_{\geq0}$ is \emph{negligible} if, for every polynomial $p$, there is an $N_p$ such that $\mu(n)<1/p(n)$ for every $n\geq N_p$. 

A register $\regR$ is associated with a finite-dimensional Hilbert space, also denoted $\regR$. We
write $\linear(\regR)$ for the linear operators on $\regR$ and $\density(\regR)$ for the density
operators on $\regR$. An operator $M$ is positive/positive semidefinite, written $M\geq0$, if
$\bra{\psi}M\ket{\psi}\geq0$ for every vector $\ket{\psi}$. For self-adjoint (or Hermitian\footnote{We are only concerned with finite-dimensional Hilbert spaces.}) operators $M,N$, the
Loewner order $M\leq N$ means that $N-M\geq0$.

A density operator $\rho\in\density(\regR)$ satisfies $\rho\geq0$ and $\Tr(\rho)=1$. A pure state
$\ket{\psi}$ is identified with the rank-one density operator $\ketbra{\psi}{\psi}$. If
$\rho_{\regR\regS}$ is a bipartite state, its reduced state on $\regR$ is
$\rho_{\regR}=\Tr_{\regS}(\rho_{\regR\regS})$. All partial traces in this note are unnormalized as
linear maps on operators.

For an operator $M$, the operator norm and trace norm are
\[
\|M\|_{\infty}
:=
\sup_{\|\psi\|=1}\|M\ket{\psi}\|,
\qquad
\|M\|_1
:=
\Tr\sqrt{M^{\dagger}M}.
\]
We will repeatedly use the equivalent characterization
\[
\|M\|_{\infty}
=
\sup_{\substack{\|\xi\|=1\\\|\eta\|=1}}
\left|\bra{\eta}M\ket{\xi}\right|.
\]
The operator norm is submultiplicative:
$\|MN\|_{\infty}\leq\|M\|_{\infty}\|N\|_{\infty}$. If $M=M^{\dagger}$, then
$\|M\|_{\infty}$ is the largest absolute value of an eigenvalue of $M$.

For a self-adjoint operator
$M=\sum_j\lambda_j\ketbra{v_j}{v_j}$, define its positive part by
\[
M_+
:=
\sum_j\max\{\lambda_j,0\}\ketbra{v_j}{v_j}.
\]
Then
\[
M\leq M_+\leq\|M_+\|_{\infty}I.
\]
Consequently, for every density operator $\rho$,
\[
\Tr(M\rho)\leq\|M_+\|_{\infty}.
\]

A quantum channel
$\Phi:\linear(\regR)\rightarrow\linear(\regS)$ is a completely positive trace-preserving linear map. We abuse notation and also use $I_{\regA}:\linear(\regA) \rightarrow \linear(\regA)$ to denote the identity channel.  A measurement with outcome
set $\mathcal X$ is a POVM $\{M_x\}_{x\in\mathcal X}$ satisfying
$M_x\geq0$ and $\sum_xM_x=I$. On input state $\rho$, the Born rule gives
$\prob[x]=\Tr(M_x\rho)$.

A binary POVM $\{M_0,M_1\}$ can equivalently be represented by the self-adjoint contraction
\[
B:=M_0-M_1,
\qquad
-I\leq B\leq I.
\]
Conversely,
\[
M_0=\frac{I+B}{2},
\qquad
M_1=\frac{I-B}{2}.
\]
If outcome $0$ is assigned value $+1$ and outcome $1$ is assigned value $-1$, then
$\Tr(B\rho)=\prob[0]-\prob[1]$ is the signed bias of the outcome.

We also use the following standard facts:
\begin{enumerate}[label=(\roman*)]
    \item every self-adjoint operator has an orthonormal eigenbasis;
    \item if $M\geq rI$ for some $r>0$, then $M$ is invertible and $0<M^{-1}\leq r^{-1}I$;
    \item if two positive operators commute, then their product is positive;
    \item $\Tr(MN)=\Tr(NM)$ whenever the products are defined, and trace is multiplicative over tensor products;
    \item the ordinary vector Cauchy--Schwarz inequality is $|\braket{u}{v}|\leq\|u\|\,\|v\|$.
\end{enumerate}
These conventions are standard; see, for example, Watrous~\cite{Watrous18}.
The normalized Choi--Jamio{\l}kowski identity proved in the next subsection is standard~\cite{Jamiolkowski72,Choi75}; we record the normalization explicitly because conventions differ by factors of the input dimension.

\paragraph{Useful Lemma.} We use the following quantum information-theoretic lemma: 

\begin{theorem}
\label{thm:basic:choi}
Let $\Phi_{\regE \rightarrow \regB \regC}:\linear(\regE) \rightarrow \linear(\regB \otimes \regC)$ be a quantum channel. Suppose $M_{\regA} \in \linear(\regA)$ and $N_{\regB \regC} \in \linear(\regB \otimes \regC)$. Let $\regE$ be a register such that $\reg{A} \simeq \reg{E}$. Define the maximally entangled state to be $\ket{\Omega}_{\regA \regE} = \frac{1}{\sqrt{\dimension(\regA)}} \sum_{j \in [\dimension(\regA)]} \ket{j}_{\regA}\ket{j}_{\regE}$. We have the following:
$$\frac{1}{\dimension(\regA)} \cdot \Tr \left( N_{\regB \regC} \Phi_{\regE \rightarrow \regB \regC}(M_{\regE}) \right) = \Tr((M^{T}_{\regA} \otimes N_{\regB \regC})(I_{\regA} \otimes \Phi_{\regE \rightarrow \regB \regC})(\ketbra{\Omega}{\Omega}_{\regA \regE}))$$
\end{theorem}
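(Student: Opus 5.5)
By linearity in $M$ and $N$, it suffices to verify the identity on matrix units $M=\ketbra{a}{b}$ and a fixed operator $N$. So the plan is to reduce to this case and then expand both sides in the computational basis $\{\ket{j}\}$ fixed in the definition of $\ket{\Omega}$. Linearity holds because both sides are linear in $M$: the transpose is linear, and so are $\Phi$ and the trace. Both sides are also linear in $N$, so we could reduce further, but that is unnecessary.

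The first step is to write
\[
\ketbra{\Omega}{\Omega}_{\regA\regE}=\frac{1}{\dimension(\regA)}\sum_{j,k}\ketbra{j}{k}_{\regA}\otimes\ketbra{j}{k}_{\regE},
\]
so that
\[
(I_{\regA}\otimes\Phi)(\ketbra{\Omega}{\Omega})=\frac{1}{\dimension(\regA)}\sum_{j,k}\ketbra{j}{k}_{\regA}\otimes\Phi(\ketbra{j}{k}_{\regE}).
\]
This uses only the linearity of $\Phi$ and the definition of the tensor-extended map.

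The second step is to multiply by $M^T_{\regA}\otimes N_{\regBC}$ and take the trace. Multiplicativity of the trace over tensor products (fact (iv)) gives
\[
\frac{1}{\dimension(\regA)}\sum_{j,k}\Tr\bigl(M^T\ketbra{j}{k}\bigr)\,\Tr\bigl(N\,\Phi(\ketbra{j}{k})\bigr).
\]
Here $\Tr(M^T\ketbra{j}{k})=\bra{k}M^T\ket{j}=\bra{j}M\ket{k}=M_{jk}$, where the transpose is taken in the same basis.

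The third step is to pull the sum back inside $\Phi$ and the trace by linearity:
\[
\frac{1}{\dimension(\regA)}\Tr\Bigl(N\,\Phi\Bigl(\sum_{j,k}M_{jk}\ketbra{j}{k}\Bigr)\Bigr)=\frac{1}{\dimension(\regA)}\Tr\bigl(N\,\Phi(M_{\regE})\bigr).
\]
This is exactly the left-hand side, using the identification $\regA\simeq\regE$ that carries the basis $\ket{j}_{\regA}$ to $\ket{j}_{\regE}$.

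There is no real obstacle in this argument. The only point needing care is bookkeeping: the transpose must be taken in the same basis used to define $\ket{\Omega}$, and the normalization factor $1/\dimension(\regA)$ coming from $\ketbra{\Omega}{\Omega}$ must match the prefactor on the left-hand side. I would check both explicitly. Complete positivity and trace preservation of $\Phi$ are not needed; linearity suffices.
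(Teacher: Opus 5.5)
Your proposal is correct and matches the paper's proof: you expand $\ketbra{\Omega}{\Omega}$ in the computational basis and factor the trace over the tensor product. You then use $\Tr(M^T\ketbra{j}{k})=M_{jk}$ and resum by linearity of $\Phi$. The preliminary reduction to matrix units is unnecessary, as you note yourself, and your remark that only linearity of $\Phi$ is used is accurate.
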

\begin{proof}
Suppose the $(j,k)^{th}$ entry of $M$ is $\alpha_{j,k}$. Thus, the $(k,j)^{th}$ entry of $M^{T}$ is $\alpha_{j,k}$. 
\begin{eqnarray*}
& & \Tr((M^{T}_{\regA} \otimes N_{\regB \regC})(I_{\regA} \otimes \Phi_{\regE \rightarrow \regB \regC})(\ketbra{\Omega}{\Omega}_{\regA \regE})) \\
& = & \frac{1}{\dimension(\regA)} \sum_{j,k \in [\dimension(\regA)]}  \Tr((M^{T}_{\regA} \otimes N_{\regB \regC})(I_{\regA} \otimes \Phi_{\regE \rightarrow \regB \regC})(\ketbra{j}{k}_{\regA} \otimes \ketbra{j}{k}_{\regE})) \\
& = & \frac{1}{\dimension(\regA)} \sum_{j,k \in [\dimension(\regA)]} \Tr(M^{T}_{\regA} \ketbra{j}{k}_{\regA}) \cdot \Tr(N_{\regB \regC} \Phi_{\regE \rightarrow \regB \regC} (\ketbra{j}{k}_{\regE})) \\ 
& = & \frac{1}{\dimension(\regA)} \sum_{j,k \in [\dimension(\regA)]} \alpha_{j,k} \cdot \Tr(N_{\regB \regC} \Phi_{\regE \rightarrow \regB \regC} (\ketbra{j}{k}_{\regE})) \\
& = & \frac{1}{\dimension(\regA)} \cdot \Tr(N_{\regB \regC} \Phi_{\regE \rightarrow \regB \regC} \left( \sum_{j,k \in [\dimension(\regA)]} \alpha_{j,k} \ketbra{j}{k}_{\regE} \right)) \\
& = & \frac{1}{\dimension(\regA)} \cdot \Tr(N_{\regB \regC} \Phi \left( M_{\regE} \right))
\end{eqnarray*}
\end{proof}

\subsection{Paulis and Cliffords}
\label{sec:paulis:cliffords}
We describe the single-qubit Pauli matrices below:
\[
I=
\begin{pmatrix}
1 & 0\\
0 & 1
\end{pmatrix},
\qquad
X=
\begin{pmatrix}
0 & 1\\
1 & 0
\end{pmatrix},
\qquad
Y=
\begin{pmatrix}
0 & -i\\
i & 0
\end{pmatrix},
\qquad
Z=
\begin{pmatrix}
1 & 0\\
0 & -1
\end{pmatrix}.
\]

\noindent The phase gate is
\[
S=
\begin{pmatrix}
1 & 0\\
0 & i
\end{pmatrix},
\]

\noindent and the Hadamard gate is
\[
H=
\frac{1}{\sqrt{2}}
\begin{pmatrix}
1 & 1\\
1 & -1
\end{pmatrix}.
\]

\noindent We state a basic fact below. 

\begin{fact}
\label{fact:paulis}
Let $P$ be a single-qubit Pauli and let $\ket{\psi}$ be a single-qubit state such that $P \ket{\psi} = \theta \ket{\psi}$, where $\theta \in \{+1,-1\}$. Then measuring $\ket{\psi}$ according to the observable $P$ -- i.e., measuring $\ket{\psi}$ according to the two-outcome projective measurement $\left\{ \underbrace{\frac{I + P}{2}}_{\text{outcome }+1},\ \underbrace{\frac{I - P}{2}}_{\text{outcome }-1} \right\}$ -- yields the outcome $\theta$ with probability 1. 
\end{fact}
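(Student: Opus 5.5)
The statement is Fact~\ref{fact:paulis}. I will prove it by a direct Born-rule computation using only $P^2=I$ and the eigenvalue equation. First I will record that every single-qubit Pauli $P\in\{I,X,Y,Z\}$ is Hermitian and squares to the identity. This can be checked from the explicit matrices in Section~\ref{sec:paulis:cliffords}. It makes $\Pi_{\pm}:=\frac{I\pm P}{2}$ orthogonal projectors: they are self-adjoint, $\Pi_\pm^2=\frac{I\pm 2P+P^2}{4}=\Pi_\pm$, and $\Pi_++\Pi_-=I$. So the stated two-outcome measurement is a valid projective measurement, and the Born rule from the preliminaries applies.

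Next I will apply the eigenvalue equation $P\ket{\psi}=\theta\ket{\psi}$, with $\theta\in\{+1,-1\}$, to each projector. This gives
\[
\Pi_{\pm}\ket{\psi}=\tfrac{1\pm\theta}{2}\ket{\psi}.
\]
Thus the projector for outcome $\theta$ fixes $\ket{\psi}$, and the projector for outcome $-\theta$ annihilates it. Hence $\prob[\theta]=\bra{\psi}\Pi_{\theta}\ket{\psi}=\braket{\psi}{\psi}=1$ for a unit vector $\ket{\psi}$, and $\prob[-\theta]=0$.

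A side remark concerns $P=I$. There $\theta$ must equal $+1$, and $\Pi_-=0$, so the claim still holds, trivially.

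No step here is a real obstacle. The only point needing care is checking $P^2=I$ and Hermiticity, so that the operators really form a projective measurement. This is immediate from the matrices, or from $Y=iXZ$ with $X^2=Z^2=I$ and $XZ=-ZX$.
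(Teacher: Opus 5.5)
Your proof is correct. The paper states this as a standard fact and gives no proof, so there is nothing to compare against. Your argument is the standard verification: $P$ is Hermitian with $P^2=I$, so $\frac{I\pm P}{2}$ form a projective measurement, and $\frac{I\pm P}{2}\ket{\psi}=\tfrac{1\pm\theta}{2}\ket{\psi}$ gives probability $1$ for outcome $\theta$ by the Born rule.

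Your remark about $P=I$ is worth keeping. The correctness lemma applies this fact at positions $j\geq 2$ where $P_{x_j,z_j}=I$, with eigenvalue $(-1)^{q_j r_j}=+1$.
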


\paragraph{Notation.} We use the following notation. For strings $x \in \{0,1\}^{\secparam},z \in \{0,1\}^{\secparam}$,
$$P_{x,z} = \bigotimes_{j \in [\secparam]} P_{x_j,z_j},$$
where $P_{x_j,z_j} = i^{x_jz_j} X^{x_j} Z^{z_j}$. In other words,  
$$P_{x_j,z_j} = \left\{ \begin{array}{cc} I & x_j=z_j=0, \\ X & x_j=1,z_j=0, \\ Z & x_j=0,z_j=1, \\ Y & x_j=1,z_j=1
\end{array} \right.$$

\paragraph{Identities.} The following are well-known identities: 
\begin{enumerate}
\item $HXH=Z$
\item $(SH)^{\dagger} Y (SH)=Z$
\item $X^2 = Y^2 = Z^2 = I$
\item 
$\Tr(X) = \Tr(Y) = \Tr(Z) = 0$
\item 
$\Tr(XY) = \Tr(YX) = \Tr(XZ) = \Tr(ZX) = \Tr(YZ) = \Tr(ZY) = 0$
\end{enumerate}

\subsection{Unclonable encryption and unclonable indistinguishability}

We recall the definition of unclonable encryption below; our formulation follows the unclonable IND-CPA definition of Ananth, Kaleoglu, Li, Liu, and Zhandry~\cite{AnanthEtAl22}, building on Broadbent--Lord and Ananth--Kaleoglu~\cite{BroadbentLord20,AnanthKaleoglu21}.

\begin{definition}[One-time private-key unclonable encryption]
A one-time private-key unclonable encryption scheme is a triple of quantum algorithms
$\Pi=(\keygen,\enc,\dec)$ with message space $\mathcal M_{\secparam}$:
\begin{itemize}
    \item $\keygen(1^{\secparam})$ outputs a classical secret key $k$;
    \item $\enc(k,m)$, for $m\in\mathcal M_{\secparam}$, outputs a quantum ciphertext state $\rho_{\ct}$;
    \item $\dec(k,\rho_{\ct})$ outputs a classical message.
\end{itemize}
The scheme is correct if there is a negligible function $\mu$ such that, for every $m\in\mathcal M_{\secparam}$,
\[
\prob\left[\dec(k,\enc(k,m))=m:\ k\xleftarrow{\$}\keygen(1^{\secparam})\right]
\geq 1-\mu(\secparam).
\]
Perfect correctness means that the probability is $1$ for every supported key and every message.
\end{definition}

\begin{definition}[One-time unclonable-indistinguishability game]
\label{def:UI}
Let $\Pi=(\keygen,\enc,\dec)$ be a private-key quantum encryption scheme. An adversary is a tuple
$(\mathcal G,\alice,\bob,\charlie)$, where $\mathcal G$ chooses two equal-length challenge messages,
$\alice$ is the pre-key splitting algorithm, and $\bob,\charlie$ are the noncommunicating post-split algorithms. The experiment proceeds as follows:
\begin{enumerate}
    \item $(m_0,m_1,\rho_{\regS})\leftarrow\mathcal G(1^{\secparam})$, with $m_0,m_1\in\mathcal M_{\secparam}$;
    \item Sample $k \leftarrow \keygen(1^{\secparam})$ and $b\xleftarrow{\$}\{0,1\}$;
    \item Prepare $\rho_{\ct}\leftarrow\enc(k,m_b)$ and give $\rho_{\ct}\otimes\rho_{\regS}$ to $\alice$;
    \item $\alice$ applies a channel and outputs a bipartite state $\rho_{\regB\regC}$;
    \item Reveal the same key $k$ to $\bob$ and $\charlie$; they output bits $b_{\bob}$ and $b_{\charlie}$ without communicating;
    \item $(\mathcal G,\alice,\bob,\charlie)$ wins exactly when $b_{\bob}=b_{\charlie}=b$.
\end{enumerate}
We denote the winning probability by
$\prob\left[\operatorname{Win}^{\mathsf{UI}}_{\Pi,\mathcal G,\alice,\bob,\charlie}(\secparam)\right]$.
\end{definition}

\begin{definition}[Unclonable indistinguishability]
The scheme $\Pi$ is information-theoretically $\varepsilon(\secparam)$-unclonable-indistinguishable if, for every finite-dimensional adversary $(\mathcal G,\alice,\bob,\charlie)$,
\[
\prob\left[\operatorname{Win}^{\mathsf{UI}}_{\Pi,\mathcal G,\alice,\bob,\charlie}(\secparam)\right] 
\leq
\frac{1}{2}+\varepsilon(\secparam).
\]
It is computationally secure if the same inequality holds for every $\qpt$ adversary with
$\varepsilon=\negl$. The baseline $1/2$ is achieved by sending one shared random guess to both recipients. We denote the advantage of $(\mathcal G,\alice,\bob,\charlie)$ to be $\prob\left[\operatorname{Win}^{\mathsf{UI}}_{\Pi,\mathcal G,\alice,\bob,\charlie}(\secparam)\right] - \frac{1}{2}$. 
\end{definition}

\begin{remark}
\label{rem:def:simplification}
In this work, we consider the message space to be $\{0,1\}$. Hence, we only consider adversaries of the form $(\alice,\bob,\charlie)$ since we can without loss of generality assume that $\mathcal{G}$ chooses $m_0=0$ and $m_1=1$. 
\end{remark}

\section{Construction}

\noindent We describe a construction of an unclonable encryption scheme for 1-bit messages as follows: 
\begin{itemize}
    \item \underline{$\keygen(1^{\secparam})$}: On input a statistical security parameter $\secparam \geq 1$, it does the following: 
    \begin{itemize}
        \item $x_2 \cdots x_{\secparam} \xleftarrow{\$} \{0,1\}^{\secparam - 1}$,
        \item $z \xleftarrow{\$} \{0,1\}^{\secparam}$  
    \end{itemize}
    Set $x_1=1$ and $x = x_1 \cdots x_{\secparam}$. Output $\key=(x,z)$.  
    \item \underline{$\enc(k,m \in \{0,1\})$}: On input a key $k$ parsed as $(x,z)$, message $m \in \{0,1\}$, do the following: 
    \begin{itemize}
        \item For every $j \in [\secparam]$, set $U_{x_j,z_j}$ as:
        $$U_{x_j,z_j} = \left\{ \begin{array}{cc} I & x_j=0 \\
        H & x_j=1,z_j=0 \\ 
        SH & x_j=1,z_j=1
        \end{array}  \right. $$
        Set $U_k = U_{x_1,z_1} \otimes \cdots \otimes U_{x_{\secparam},z_{\secparam}}$. 
        \item Sample $r_2 \cdots r_{\secparam} \xleftarrow{\$} \{0,1\}^{\secparam-1}$. Set $r_1$ as follows: 
        $$r_1 = m\ \bigoplus\ \left( \bigoplus_{j=2}^n q_j \cdot r_j \right),$$
        where $q_j$ is defined as follows: 
        $$q_j = \left\{ \begin{array}{cc} 0 & P_{x_j,z_j} = I,\\ 1 & P_{x_j,z_j} \neq I   \end{array} \right. $$
    \end{itemize}
    Output an $\secparam$-qubit ciphertext state $\ket{\psi_{k,m;r}} = U_k \ket{r_1 \cdots r_{\secparam}}$.

    \item \underline{$\dec(k,\sigma)$}: On input the secret key $k$ parsed as $(x,z)$ and an $\secparam$-qubit ciphertext state $\sigma$, do the following: for every $j \in [\secparam]$, measure the $j^{th}$ qubit using the $P_{x_j,z_j}$ observable. That is, measure the $j^{th}$ qubit using the two-outcome projective measurement ${\cal M}_j = \left\{ \underbrace{\frac{I + P_{x_j,z_j}}{2}}_{\text{outcome }+1},\ \underbrace{\frac{I - P_{x_j,z_j}}{2}}_{\text{outcome }-1} \right\}$ to obtain the outcome $\theta_j$. If $\prod_{j=1}^n \theta_j$ is $+1$ then output $0$, else output $1$.  
\end{itemize}

\begin{lemma}
The above scheme satisfies correctness. 
\end{lemma}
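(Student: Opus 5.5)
The plan is a direct verification. We show that for every key $k=(x,z)$ with $x_1=1$, every message $m$, and every choice of randomness $r$, the ciphertext $\ket{\psi_{k,m;r}}=U_k\ket{r_1\cdots r_\secparam}$ is a product of single-qubit eigenstates of the local observables $P_{x_j,z_j}$. We then use Fact~\ref{fact:paulis} to conclude that every local measurement in $\dec$ is deterministic, and finally check that the product of the outcomes is $(-1)^m$.

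\textbf{Step 1: local eigenvalue computation.} For each position $j$, we claim
\[
P_{x_j,z_j}\,U_{x_j,z_j}\ket{r_j}=(-1)^{q_j r_j}\,U_{x_j,z_j}\ket{r_j}.
\]
We check the four cases of $(x_j,z_j)$ using the identities of Section~\ref{sec:paulis:cliffords}.
\begin{itemize}
\item $(0,0)$: here $U=I$ and $P=I$, so the eigenvalue is $1=(-1)^{0}$, matching $q_j=0$.
\item $(0,1)$: here $U=I$ and $P=Z$, and $Z\ket{r_j}=(-1)^{r_j}\ket{r_j}$.
\item $(1,0)$: here $U=H$ and $P=X$. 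From $HXH=Z$ and $H^2=I$ we get $XH=HZ$, so $XH\ket{r_j}=(-1)^{r_j}H\ket{r_j}$.
\item $(1,1)$: here $U=SH$ and $P=Y$. From $(SH)^\dagger Y(SH)=Z$ and unitarity of $SH$ we get $Y(SH)=(SH)Z$, so again the eigenvalue is $(-1)^{r_j}$.
\end{itemize}
In the last three cases $P_{x_j,z_j}\neq I$, so $q_j=1$ and the claim holds. Since $x_1=1$, we have $P_{x_1,z_1}\in\{X,Y\}$, and we may set $q_1=1$ consistently with the definition.

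\textbf{Step 2: deterministic measurement outcomes.} Because the ciphertext is a tensor product of these single-qubit states, and the measurements $\mathcal M_j$ act on distinct qubits, we can handle each qubit separately. Applying Fact~\ref{fact:paulis} qubit by qubit shows that the outcome $\theta_j=(-1)^{q_jr_j}$ occurs with probability $1$ for each $j$. The post-measurement state remains the same product state, so measurements on later qubits are unaffected.

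\textbf{Step 3: parity check.} Using $q_1=1$ and the definition of $r_1$, the product of outcomes is
\[
\prod_{j=1}^{\secparam}\theta_j=(-1)^{\,r_1\oplus\bigoplus_{j\ge2}q_jr_j}=(-1)^{m}.
\]
Hence $\dec$ outputs $m$ with probability $1$ for every key and every choice of randomness, which is perfect correctness.

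The only step needing any care is Step 1. Its content is the conjugation identities, together with the observation that the first position is never the identity. Without that observation the bit $r_1$ would not enter the parity, and correctness would fail.
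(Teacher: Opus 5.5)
Your proposal is correct and takes essentially the same approach as the paper. The paper compresses your four-case check into the single conjugation identity $U_{x_j,z_j}Z^{q_j}U_{x_j,z_j}^{\dagger}=P_{x_j,z_j}$, then applies Fact~\ref{fact:paulis} qubit by qubit and uses $q_1=1$ (because $x_1=1$) to get $\prod_j\theta_j=(-1)^m$.
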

\begin{proof}
\noindent Let  $\ket{\psi_{k,m;r}}=\ket{\psi_{k,m;r}^{(1)}} \otimes \cdots \otimes \ket{\psi_{k,m;r}^{(\secparam)}}$ be a ciphertext state of a message bit $m \in \{0,1\}$. Using the well-known identities from~\Cref{sec:paulis:cliffords}, we have that $U_{x_j,z_j} Z^{q_j} U_{x_j,z_j}^{\dagger} = P_{x_j,z_j}$. For every $j \in [\secparam]$, 

\begin{eqnarray*}
P_{x_j,z_j} \ket{\psi_{k,m;r}^{(j)}} & = & P_{x_j,z_j} U_{x_j,z_j} \ket{r_j} \\
& = & U_{x_j,z_j} Z^{q_j} \ket{r_j} \\
& = & (-1)^{q_j \cdot r_j} U_{x_j,z_j} \ket{r_j} \\
& = & (-1)^{q_j \cdot r_j} \ket{\psi_{k,m;r}^{(j)}}
\end{eqnarray*}
\noindent Thus, $\ket{\psi_{k,m;r}^{(j)}}$ is an eigenvector of $P_{x_j,z_j}$ with eigenvalue $(-1)^{q_jr_j}$. From Fact~\ref{fact:paulis}, we have that measuring $\ket{\psi_{k,m;r}^{(j)}}$ using ${\cal M}_j$ yields the outcome $\theta_j=(-1)^{q_j \cdot r_j}$. If we take the product of all $\theta_j$'s we get: 
$$\prod_{j=1}^{\secparam} \theta_j = (-1)^{\bigoplus_{j=1}^n q_j r_j} = (-1)^{r_1 \oplus \left(\bigoplus_{j=2}^n q_j r_j \right)} = (-1)^{m}$$
In the above, we are using the fact that $q_1=1$ since $P_{x_1,z_1} \neq I$. 
\end{proof}

\subsection{Security}
\begin{theorem}
Let $\Pi=(\keygen,\enc,\dec)$, where $\keygen$, $\enc$ and $\dec$ are as defined above. For every $n \geq 1$ and every adversary $(\alice,\bob,\charlie)$, we have:
$$\prob\left[\operatorname{Win}^{\mathsf{UI}}_{\Pi,\mathcal G,\alice,\bob,\charlie}(\secparam)\right] \leq \frac{1}{2} + \frac{1}{2^{\frac{(n+1)}{2}}},$$
where $\prob\left[\operatorname{Win}^{\mathsf{UI}}_{\Pi,\mathcal G,\alice,\bob,\charlie}(\secparam)\right]$ is defined in~\Cref{def:UI}.  
\end{theorem}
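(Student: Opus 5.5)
We follow the two-step route from the technical overview: a Choi--Jamio\l{}kowski reduction to an operator-norm problem, then a filtered-overlap estimate.

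\emph{Step 1 (reduction).} By Stinespring dilation we may assume the channel of $\alice$ is fixed. By Naimark dilation, $\bob$ and $\charlie$ apply binary POVMs that depend on $k$, represented by signed contractions $B_k$ and $C_k$ with $-I\leq B_k,C_k\leq I$. The winning event $b_{\bob}=b_{\charlie}=b$ has indicator
\[
\tfrac14\bigl(1+(-1)^b\beta\bigr)\bigl(1+(-1)^b\gamma\bigr)
\]
in the $\pm1$ outcomes $\beta,\gamma$. Expanding this and averaging over $b$, $r$ and $k$ gives two kinds of terms. The terms linear in the outcomes involve the signed average ciphertext
\[
\sigma_{k,+}-\sigma_{k,-}=\expect_r\bigl(\ketbra{\psi_{k,0;r}}{\psi_{k,0;r}}-\ketbra{\psi_{k,1;r}}{\psi_{k,1;r}}\bigr).
\]
Averaging over $r$, a direct computation shows this equals $2^{-n}P_{x,z}$. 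The terms that do not depend on $b$ involve only $\sigma_k=2^{-n}I$.

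We then apply \Cref{thm:basic:choi} with $M=P_{x,z}$ or $M=I$ and $N=B_k\otimes I$, $I\otimes C_k$, or $B_k\otimes C_k$. This turns everything into expectations in the single tripartite state
\[
\tau=(I\otimes\Phi)(\ketbra{\Omega}{\Omega}),
\]
tensored with the auxiliary register. The result is
\[
\prob[\mathsf{Win}]=\tfrac12\bigl(1+\Tr(\mathbf G\tau)\bigr),\qquad \mathbf G=\tfrac12\bigl(\mathsf E_B+\mathsf E_C+\mathsf E_{BC}-I\bigr),
\]
where
\[
\mathsf E_B=\expect_k P_{x,z}^T\otimes B_k\otimes I,\qquad \mathsf E_C=\expect_k P_{x,z}^T\otimes I\otimes C_k,\qquad \mathsf E_{BC}=\expect_k I\otimes B_k\otimes C_k.
\]
Using $\Tr(M\rho)\leq\|M_+\|_\infty$ reduces the theorem to showing
\[
\|\mathbf G_+\|_\infty\leq 2^{-(n-1)/2}.
\]
One bookkeeping check is needed here: the constant term has to come out exactly as $-I$ and $+\mathsf E_{BC}$ with the factor $\tfrac12$, so the signs in the expansion must be tracked carefully.

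\emph{Step 2 (filtered overlap).} Let $t=\|\mathbf G_+\|_\infty>0$ and take an eigenvector $\mathbf G\ket\psi=t\ket\psi$. Set
\[
D=2tI+(I-\mathsf E_{BC})\geq 2tI,\qquad \mathsf F=2tD^{-1}\in(0,I].
\]
The eigen-equation reads $(\mathsf E_B+\mathsf E_C)\ket\psi=D\ket\psi$. Hence
\[
\bra\psi(\mathsf E_B+\mathsf E_C)\mathsf F(\mathsf E_B+\mathsf E_C)\ket\psi=2t\bra\psi D\ket\psi=4t^2+2t\bra\psi(I-\mathsf E_{BC})\ket\psi.
\]
For the difference term, use $\mathsf F\leq I$. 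This gives
\[
\|(\mathsf E_B-\mathsf E_C)\ket\psi\|^2\leq\bra\psi(\mathsf E_B^2+\mathsf E_C^2-\mathsf E_B\mathsf E_C-\mathsf E_C\mathsf E_B)\ket\psi.
\]
We would then bound $\mathsf E_B^2+\mathsf E_C^2$ in terms of $I+\mathsf E_{BC}$ plus cross terms. This would use convexity ($\mathsf E_B^2\leq\expect_k B_k^2\leq I$) and the structure of $D$, so that after adding the two quadratic forms the $(I-\mathsf E_{BC})$ contributions cancel, leaving
\[
4\,\mathrm{Re}\bra\psi\mathsf E_B\mathsf F\mathsf E_C\ket\psi\geq 4t^2.
\]
I expect this to be the most delicate algebraic step. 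If it fails as stated, I would replace $\mathsf F$ by $\mathsf F^{1/2}$-symmetrized forms.

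\emph{Step 3 (norm bound; main obstacle).} It remains to show $\|\mathsf E_B\mathsf F\mathsf E_C\|_\infty\leq 2^{-(n-1)}$. First we establish
\[
\|\mathsf E_B\mathsf E_C\|_\infty\leq 2^{-(n-1)}.
\]
Write $\mathsf E_B\mathsf E_C=\expect_{k,k'}P_k^TP_{k'}^T\otimes B_k\otimes C_{k'}$. Then bound $\|\mathsf E_B\mathsf E_C\ket\xi\|$ via Cauchy--Schwarz, using the Hilbert--Schmidt orthogonality $\Tr(P_kP_{k'})=2^n\delta_{kk'}$ of the $2^{2n-1}$ distinct Paulis on the reference system. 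This should give the factor equal to the inverse of the key-count normalization, $2^{-(n-1)}$. For the filter, expand
\[
\mathsf F=\frac{2t}{1+2t}\sum_{\ell\geq0}\Bigl(\frac{\mathsf E_{BC}}{1+2t}\Bigr)^\ell.
\]
Truncate at $N_0$, and bound each $\|\mathsf E_B\mathsf E_{BC}^\ell\mathsf E_C\|_\infty$ by the same orthogonality argument, since $\mathsf E_{BC}$ acts trivially on the reference. The tail is controlled geometrically and vanishes as $N_0\to\infty$. The coefficients then sum to at most $1$, giving $t^2\leq 2^{-(n-1)}$ and hence
\[
\prob[\mathsf{Win}]\leq\tfrac12+2^{-(n+1)/2}.
\]
The genuine obstacle is making the orthogonality argument survive the interleaved $\mathsf E_{BC}^\ell$. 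I would handle this by viewing $\mathsf E_B$ as a block-operator $V^\dagger W$ with $\|V\|,\|W\|$ controlled by the Pauli frame bound.
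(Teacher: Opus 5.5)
Your reduction (Step 1) and the outer architecture of Steps 2--3 match the paper's: the top eigenvector $\psi$ of $\mathbf{G}$, the filter $\mathsf{F}=2t(2tI+\mathsf{D})^{-1}$ with $\mathsf{D}=I-\mathsf{E}_{BC}$ (your $D$ is $2tI+\mathsf{D}$), polarization, and the geometric expansion of $\mathsf{F}$. The lower bound in Step 2, however, has a genuine gap. Since $Q_{\mathsf{F}}((\mathsf{E}_B+\mathsf{E}_C)\psi)=4t^2+2t\bra{\psi}\mathsf{D}\ket{\psi}$ holds exactly, you need $\bra{\psi}\mathsf{R}\mathsf{F}\mathsf{R}\ket{\psi}\leq2t\bra{\psi}\mathsf{D}\ket{\psi}$, where $\mathsf{R}=\mathsf{E}_B-\mathsf{E}_C$. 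Bounding $\mathsf{F}\leq I$ throws away exactly the factor $2t$ that the filter carries. The resulting target $\|\mathsf{R}\psi\|^2\leq2t\bra{\psi}\mathsf{D}\ket{\psi}$ is false, so no convexity estimate on $\mathsf{E}_B^2+\mathsf{E}_C^2$ can close it. For instance, let Bob receive the ciphertext untouched and decode honestly ($B_k=P_{x,z}$), and let Charlie's register be trivial with $C_k=\epsilon$ for all $k$. Then $\mathsf{E}_B\ket{\Omega}=\ket{\Omega}$ and $\|\mathsf{E}_C\|_\infty,\|\mathsf{E}_{BC}\|_\infty\leq\epsilon$, so $t\leq\epsilon$. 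Moreover $t>0$: we have $\bra{\Omega}\mathbf{G}\ket{\Omega}=0$ because $\Tr P_{x,z}=0$, while $\mathbf{G}\ket{\Omega}=\tfrac{\epsilon}{L}\sum_k(I\otimes P_{x,z})\ket{\Omega}\neq0$, so $\mathbf{G}\not\leq0$. From $\bra{\psi}\mathbf{G}\ket{\psi}=t>0$ you get $\bra{\psi}\mathsf{E}_B\ket{\psi}\geq1-2\epsilon$, hence $\|\mathsf{R}\psi\|^2\geq(1-3\epsilon)^2$, whereas $2t\bra{\psi}\mathsf{D}\ket{\psi}\leq2\epsilon(1+\epsilon)$. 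If you keep the filter, $\mathsf{F}\approx2tI$ here and both sides of the correct inequality are $\approx2t$, which shows the filter is doing essential work.

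The missing idea is an operator-order relation between $\mathsf{R}$ and $\mathsf{D}$, and it comes from the per-key structure rather than from norm bounds. For each fixed $k$, the operators $\Theta_k\otimes B_k\otimes I$ and $\Theta_k\otimes I\otimes C_k$ are commuting Hermitian contractions: their product is $I\otimes B_k\otimes C_k$ in either order, since $\Theta_k^2=I$. Hence $\tfrac14(I+u\,\Theta_k\otimes B_k\otimes I)(I+v\,\Theta_k\otimes I\otimes C_k)\geq0$. Averaging over $k$ gives $\Gamma^{u,v}=\tfrac14(I+u\mathsf{E}_B+v\mathsf{E}_C+uv\mathsf{E}_{BC})\geq0$, and the two conditions $\Gamma^{+,-},\Gamma^{-,+}\geq0$ say exactly $-\mathsf{D}\leq\mathsf{R}\leq\mathsf{D}$. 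Polarizing this gives $\bra{\xi}\mathsf{D}\ket{\xi}+2\operatorname{Re}\bra{\xi}\mathsf{R}\ket{\eta}+\bra{\eta}(\mathsf{D}+2tI)\ket{\eta}\geq0$ for all $\xi,\eta$. Choosing $\ket{\eta}=-(\mathsf{D}+2tI)^{-1}\mathsf{R}\ket{\xi}$ yields $\mathsf{R}(\mathsf{D}+2tI)^{-1}\mathsf{R}\leq\mathsf{D}$, i.e.\ $\mathsf{R}\mathsf{F}\mathsf{R}\leq2t\mathsf{D}$, which is what Step 2 needs.

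Step 3 also leaves its self-identified obstacle open, but neither a rerun of the orthogonality argument nor the unspecified $V^\dagger W$ factorization is needed. Instead, expand $\mathsf{E}_{BC}^\ell$ into product terms $I\otimes B_{k_1}\cdots B_{k_\ell}\otimes C_{k_1}\cdots C_{k_\ell}$. Term by term, move the Charlie factor left past $\mathsf{E}_B$ and the Bob factor right past $\mathsf{E}_C$. This writes $\mathsf{E}_B\mathsf{E}_{BC}^\ell\mathsf{E}_C$ as an average of (contraction)$\cdot\mathsf{E}_B\mathsf{E}_C\cdot$(contraction), so its norm is at most $\|\mathsf{E}_B\mathsf{E}_C\|_\infty$. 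A minor slip in Step 1: $\rho^{(k,0)}-\rho^{(k,1)}=2^{1-n}P_{x,z}$, not $2^{-n}P_{x,z}$.
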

\begin{proof}
The proof is divided into the following parts: 
\begin{itemize}
    \item Part 1: we will analyze the density matrix associated with the ciphertext state when the key $k$ and the message $m$ is fixed. 
    \item Part 2: we will consider an adversary $(\alice,\bob,\charlie)$ and express the winning probability in terms of upper bounding the operator norm of a matrix.
    \item Part 3: we show an upper bound on the operator norm of the matrix constructed in Part 2. 
\end{itemize}
\paragraph{Part 1.} We first fix the key $k$ and the message bit $m$ and consider the resulting density matrix (over the randomness of $r$) associated with the ciphertext state:
\begin{eqnarray}
    \rho_{\ct}^{(k,m)} & = & \frac{1}{2^{\secparam-1}} \sum_{\substack{ r \in \{0,1\}^{\secparam}\\ \text{s.t. }r_1 = m \oplus \left( \bigoplus_{j=2}^{\secparam} q_j r_j \right) }} U_{k} \ketbra{r_1 \cdots r_n}{r_1 \cdots r_n} U_k^{\dagger} \label{eqn:main:ct:densmx}
\end{eqnarray}
\noindent Before analyzing the above equation, we first make the following observations:
\begin{eqnarray*}
Z^{q_1} \otimes \cdots \otimes Z^{q_{\secparam}} & = &  \sum_{\substack{r \in \{0,1\}^{\secparam}}} (-1)^{\langle q,r \rangle} \ketbra{r_1 \cdots r_n}{r_1 \cdots r_n} \\
& = &  \sum_{\substack{r \in \{0,1\}^{\secparam}\\ \text{s.t. } r_1=\oplus_{j=2}^{\secparam} q_jr_j }} \ketbra{r_1 \cdots r_n}{r_1 \cdots r_n} - \sum_{\substack{r \in \{0,1\}^{\secparam}\\ \text{s.t. } r_1 = \oplus_{j=2}^{\secparam} q_jr_j \oplus 1 }} \ketbra{r_1 \cdots r_n}{r_1 \cdots r_n}
\end{eqnarray*}
Moreover, note that: 
\begin{eqnarray*}
(-1)^m \cdot (Z^{q_1} \otimes \cdots \otimes Z^{q_{\secparam}}) 
& = & (-1)^m \cdot \sum_{\substack{r \in \{0,1\}^{\secparam}\\ \text{s.t. } r_1=\oplus_{j=2}^{\secparam} q_jr_j }} \ketbra{r_1 \cdots r_n}{r_1 \cdots r_n} \\
& & - (-1)^m \cdot \sum_{\substack{r \in \{0,1\}^{\secparam}\\ \text{s.t. } r_1 = \oplus_{j=2}^{\secparam} q_jr_j \oplus 1 }} \ketbra{r_1 \cdots r_n}{r_1 \cdots r_n} \\
& = & \sum_{\substack{r \in \{0,1\}^{\secparam}\\ \text{s.t. } r_1=\oplus_{j=2}^{\secparam} q_jr_j \oplus m}} \ketbra{r_1 \cdots r_n}{r_1 \cdots r_n} - \sum_{\substack{r \in \{0,1\}^{\secparam}\\ \text{s.t. } r_1 = \oplus_{j=2}^{\secparam} q_jr_j \oplus 1 \oplus m}} \ketbra{r_1 \cdots r_n}{r_1 \cdots r_n} \\
& = & \sum_{\substack{r \in \{0,1\}^{\secparam}\\ \text{s.t. } r_1=\oplus_{j=2}^{\secparam} q_jr_j \oplus m}} \ketbra{r_1 \cdots r_n}{r_1 \cdots r_n} + { \sum_{\substack{r \in \{0,1\}^{\secparam}\\ \text{s.t. } r_1=\oplus_{j=2}^{\secparam} q_jr_j \oplus m}} \ketbra{r_1 \cdots r_n}{r_1 \cdots r_n}}\\
& & { - \sum_{\substack{r \in \{0,1\}^{\secparam}\\ \text{s.t. } r_1=\oplus_{j=2}^{\secparam} q_jr_j \oplus m}} \ketbra{r_1 \cdots r_n}{r_1 \cdots r_n}} - \sum_{\substack{r \in \{0,1\}^{\secparam}\\ \text{s.t. } r_1 = \oplus_{j=2}^{\secparam} q_jr_j \oplus 1 \oplus m}} \ketbra{r_1 \cdots r_n}{r_1 \cdots r_n} \\ 
& = & 2 \sum_{\substack{r \in \{0,1\}^{\secparam}\\ \text{s.t. } r_1=\oplus_{j=2}^{\secparam} q_jr_j \oplus m}} \ketbra{r_1 \cdots r_n}{r_1 \cdots r_n} - I
\end{eqnarray*}
In other words, 
\begin{eqnarray}
\sum_{\substack{r \in \{0,1\}^{\secparam}\\ \text{s.t. } r_1=\oplus_{j=2}^{\secparam} q_jr_j \oplus m}} \ketbra{r_1 \cdots r_n}{r_1 \cdots r_n} = \frac{1}{2} \left( I + (-1)^{m} \cdot (Z^{q_1} \otimes \cdots \otimes Z^{q_n}) \right) \label{eqn:ct:density}
\end{eqnarray}
Another observation we will make is the following: for every $j \in [\secparam]$, $U_{x_j,z_j} Z^{q_j} U_{x_j,z_j}^{\dagger} = P_{x_j,z_j}$. From this, we can conclude that $U_k (\bigotimes_{j=1}^n Z^{q_j}) U_{k}^{\dagger} = P_{x,z}$. 
\par Let us now revisit~\Cref{eqn:main:ct:densmx}. We have: 
\begin{eqnarray*}
    \rho_{\ct}^{(k,m)} & = & \frac{1}{2^{\secparam-1}} \sum_{\substack{ r \in \{0,1\}^{\secparam}\\ \text{s.t. }r_1 = m \oplus \left( \bigoplus_{j=2}^{\secparam} q_j r_j \right) }} U_{k} \ketbra{r_1 \cdots r_n}{r_1 \cdots r_n} U_k^{\dagger} \\ 
    & = & U_{k} \left( \frac{1}{2^{\secparam-1}} \sum_{\substack{ r \in \{0,1\}^{\secparam}\\ \text{s.t. }r_1 = m \oplus \left( \bigoplus_{j=2}^{\secparam} q_j r_j \right) }}  \ketbra{r_1 \cdots r_n}{r_1 \cdots r_n} \right) U_k^{\dagger} \\ 
    & = & \frac{1}{2^{\secparam}} U_{k} \left( I + (-1)^{m} \cdot (Z^{q_1} \otimes \cdots \otimes Z^{q_n}) \right) U_k^{\dagger}\ \ (\text{from \Cref{eqn:ct:density}}) \\
    & = & \frac{1}{2^{\secparam}} \cdot \left( I + (-1)^m \cdot U_k (Z^{q_1} \otimes \cdots \otimes Z^{q_n}) U_{k}^{\dagger}  \right) \\ 
    & = & \frac{1}{2^{\secparam}} \cdot \left( I + (-1)^m P_{x,z} \right)
\end{eqnarray*}
Thus, we have:
\begin{eqnarray}
\rho_{\ct}^{(k,m)} & = & \frac{1}{2^{\secparam}} \cdot \left( I + (-1)^m P_{x,z} \right) \label{eqn:ctdmx:pauli}
\end{eqnarray}

\paragraph{Part 2.} Consider an adversary $(\alice,\bob,\charlie)$. As mentioned in~\Cref{rem:def:simplification}, we can assume without loss of generality that $m_0=0$ and $m_1=1$ and hence, omit $\mathcal{G}$. We denote $\regA$ to be the register corresponding to the state received by $\alice$. We denote $\Phi_{\regA \rightarrow \regB \regC}$ to be the quantum channel implemented by\footnote{To be more general, $\alice$ could have some additional auxiliary register $\regZ$ and then applies the splitting channel on $\regA$ and $\regZ$. We omit the register $\regZ$ for the rest of the proof and our analysis goes through even if $\regZ$ was taken into account.} $\alice$. That is, $\Phi$ maps $\linear(\regA)$ to $\linear(\regB \otimes \regC)$. After splitting, $\alice$ sends the register $\regB$ to $\bob$ and $\regC$ to $\charlie$. We can implement $\bob$ as a set of POVMs $\{\Lambda_{\bob,0}^{(k)},\Lambda_{\bob,1}^{(k)}\}$, one for each $k \in \{0,1\}^{2\secparam - 1}$. Similarly, $\charlie$ can be implemented as a set of POVMs  $\{\Lambda_{\charlie,0}^{(k)},\Lambda_{\charlie,1}^{(k)}\}$, one for each $k \in \{0,1\}^{2\secparam - 1}$. We can define the following:
\begin{itemize}
    \item $\Delta_{\bob}^{(k)} = \Lambda_{\bob,0}^{(k)} - \Lambda_{\bob,1}^{(k)}$. Thus, we can rewrite $\{\Lambda_{\bob,0}^{(k)},\Lambda_{\bob,1}^{(k)}\}$ as $\left\{ \frac{I + \Delta_{\bob}^{(k)}}{2},\ \frac{I - \Delta_{\bob}^{(k)}}{2}  \right\}$. We will view the outcomes as $\{+1,-1\}$ rather than $\{0,1\}$. 
    \item $\Delta_{\charlie}^{(k)} = \Lambda_{\charlie,0}^{(k)} - \Lambda_{\charlie,1}^{(k)}$. Similarly, we can rewrite $\{\Lambda_{\charlie,0}^{(k)},\Lambda_{\charlie,1}^{(k)}\}$ as $\left\{ \frac{I + \Delta_{\charlie}^{(k)}}{2},\ \frac{I - \Delta_{\charlie}^{(k)}}{2} \right\}$.
\end{itemize}
\noindent We can express the winning probability as follows: 
\begin{eqnarray}
  p = \prob\left[(\alice,\bob,\charlie)\text{ wins} \right] =  \expect_{k=(x,z),m}\left[ \Tr\left( \left( \frac{I + (-1)^{m} \Delta_{\bob}^{(k)}}{2} \right)_{\regB} \otimes \left( \frac{I + (-1)^{m} \Delta_{\charlie}^{(k)}}{2} \right)_{\regC} \Phi_{\regA \rightarrow \regB \regC}\left(\rho_{\ct}^{(k,m)}   \right) \right) \right] \label{eqn:winningprob}
\end{eqnarray}
Suppose $\regE$ be a Hilbert space that is isomorphic to $\regA$. Define the Choi state as follows: 
$$\tau_{\regA \regB \regC}  = (I_{\regA} \otimes \Phi_{\regE \rightarrow \regB \regC}) \ketbra{\Omega}{\Omega}_{\regA \regE},$$
where $\ketbra{\Omega}{\Omega}_{\regA \regE}$ is a multi-dimensional EPR state with the first half $\regA$ and the second half in $\regE$. Invoking~\Cref{thm:basic:choi} and substituting~\Cref{eqn:ctdmx:pauli} in~\Cref{eqn:winningprob}, we have the winning probability to be: 
\begin{eqnarray}
 p =  \expect_{k=(x,z),m}\left[ \Tr\left( \left( I + (-1)^{m} P_{x,z}^T \right)_{\regA} \otimes \left(   \frac{I + (-1)^{m} \Delta_{\bob}^{(k)}}{2} \right)_{\regB} \otimes \left( \frac{I + (-1)^{m} \Delta_{\charlie}^{(k)}}{2} \right)_{\regC} (\tau_{\regA \regB \regC}) \right) \right] \label{eqn:densmx:tensorproduct}
\end{eqnarray}
If we expand the tensor product of the three operators above, we will get 8 terms. Among them, exactly 4 of the terms will have the phase $(-1)^m$. Since, we are taking expectation over $m$, these terms will vanish. The only terms that will remain are the following: 
\begin{itemize}
    \item $I_{\regA} \otimes \Delta_{\bob}^{(k)} \otimes \Delta_{\charlie}^{(k)}$, 
    \item $P_{x,z}^T \otimes I_{\regB} \otimes \Delta_{\charlie}^{(k)}$,
    \item $P_{x,z}^T \otimes \Delta_{\bob}^{(k)} \otimes I_{\regC}$,
    \item $I_{\regA \regB \regC}$ 
\end{itemize}
Thus, substituting this in~\Cref{eqn:densmx:tensorproduct}, we have: 
\begin{eqnarray}
 p =  \frac{1}{4} \expect_{k=(x,z)}\left[ \Tr \left( \left( I_{\regA} \otimes \Delta_{\bob}^{(k)} \otimes \Delta_{\charlie}^{(k)} + P_{x,z}^T \otimes I_{\regB} \otimes \Delta_{\charlie}^{(k)} + P_{x,z}^T \otimes \Delta_{\bob}^{(k)} \otimes I_{\regC} + I_{\regA \regB \regC} \right)(\tau_{\regA \regB \regC}) \right) \right] \label{eqn:winprob:messageaveraged} \label{eqn:winprob:expandedtensor}
\end{eqnarray}
\noindent We define $\bfG_{k}$, for $k=(x,z)$, as follows: 
\begin{eqnarray*}
    \bfG_k =  I_{\regA} \otimes \Delta_{\bob}^{(k)} \otimes \Delta_{\charlie}^{(k)} + P_{x,z}^T \otimes I_{\regB} \otimes \Delta_{\charlie}^{(k)} +  P_{x,z}^T \otimes \Delta_{\bob}^{(k)} \otimes I_{\regC} {-} I_{\regA \regB \regC} 
\end{eqnarray*}
Similarly, we define: 
\begin{eqnarray}
    \bfG =  \frac{1}{2} \expect_{k=(x,z)} \left[ \bfG_{k} \right] \label{eqn:definingG}
\end{eqnarray}
Using~\Cref{eqn:definingG}, we rewrite~\Cref{eqn:winprob:expandedtensor} as follows: 
\begin{eqnarray*}
 p & = & \frac{1}{4} \expect_{k=(x,z)}\left[ \Tr\left( I_{\regA} \otimes \Delta_{\bob}^{(k)} \otimes \Delta_{\charlie}^{(k)} + P_{x,z}^T \otimes I_{\regB} \otimes \Delta_{\charlie}^{(k)} + P_{x,z}^T \otimes \Delta_{\bob}^{(k)} \otimes I_{\regC} + I_{\regA \regB \regC} (\tau_{\regA \regB \regC}) \right)\right]  \\ 
 & = & \frac{1}{4} \Tr\left( \left( \expect_{k=(x,z)}\left[  I_{\regA} \otimes \Delta_{\bob}^{(k)} \otimes \Delta_{\charlie}^{(k)} + P_{x,z}^T \otimes I_{\regB} \otimes \Delta_{\charlie}^{(k)} + P_{x,z}^T \otimes \Delta_{\bob}^{(k)} \otimes I_{\regC} + I_{\regA \regB \regC} \right]\right) (\tau_{\regA \regB \regC}) \right)  \\ 
  & = & \frac{1}{4} \Tr\left( \left( \expect_{k=(x,z)}\left[  I_{\regA} \otimes \Delta_{\bob}^{(k)} \otimes \Delta_{\charlie}^{(k)} + P_{x,z}^T \otimes I_{\regB} \otimes \Delta_{\charlie}^{(k)} + P_{x,z}^T \otimes \Delta_{\bob}^{(k)} \otimes I_{\regC} - I_{\regA \regB \regC} + 2I_{\regA \regB \regC} \right]\right) (\tau_{\regA \regB \regC}) \right)  \\ 
  & = & \frac{1}{4}\Tr\left(\left( 2 \bfG + 2I_{\regA \regB \regC}  \right) \tau_{\regA \regB \regC} \right) \\  
   & = & \frac{1}{2}\left(\Tr\left(\bfG \tau_{\regA \regB \regC} \right) + 1 \right) \\ 
   & \leq &  \frac{1}{2} + \frac{1}{2} \| \bfG_{+} \|_{\infty}
\end{eqnarray*}
\noindent The last inequality follows from the fact that $\tau_{\regA \regB \regC} \geq 0$, $\Tr \tau_{\regA \regB \regC} = 1$ and $\mathbf{G} \leq \mathbf{G}_{+} \leq \| \mathbf{G}_{+} \|_{\infty} I$. The rest of the proof will be devoted for upper bounding $\| \bfG_{+} \|_{\infty}$. 

\subsection{Part 3: Upper Bounding $\| \bfG_{+} \|_{\infty}$}
The first lemma that we will prove is the following. 

\begin{lemma}[Conditional Overlap]
\label{lem:conditionaloverlap}
Let $U_{\regA \regB} \in \linear(\regA \otimes \regB)$ and let $V_{\regA \regC} \in \linear(\regA \otimes \regC)$. Let $d_{\regA} = \dim(\regA)$. Suppose:
$$U_{\regA \regB} = \sum_{i,j \in [d_{\regA}]} \ketbra{i}{j}_{\regA} \otimes (U_{i,j})_{\regB}$$
$$V_{\regA \regC} = \sum_{j,k \in [d_{\regA}]} \ketbra{j}{k}_{\regA} \otimes (V_{j,k})_{\regC} $$
Define: 
$$\widetilde{U}_{\regA \regB \regC}= \sum_{i,j \in [d_{\regA}]} \ketbra{i}{j}_{\regA} \otimes (U_{i,j})_{\regB} \otimes I_{\regC} $$
$$\widetilde{V}_{\regA \regB \regC} = \sum_{j,k \in [d_{\regA}]} \ketbra{j}{k}_{\regA} \otimes I_{\regB} \otimes (V_{j,k})_{\regC} $$
Then: 
$$\| \widetilde{U}_{\regA \regB \regC} \widetilde{V}_{\regA \regB \regC} \|_{\infty} \leq \sqrt{\| \Tr_{\regA}\left( U_{\regA \regB}^{\dagger} U_{\regA \regB} \right) \|_{\infty}} \cdot \sqrt{\| \Tr_{\regA}\left( V_{\regA \regC} V_{\regA \regC}^{\dagger} \right) \|_{\infty}} $$
\end{lemma}
\begin{proof}
Suppose $\dim(\regA)=d_{\regA}$. Fix two unit vectors $\ket{\xi}_{\regA \regB \regC}, \ket{\eta}_{\regA \regB \regC}$. Consider the following: 
 
\begin{eqnarray*}
& & \left| \bra{\eta}_{\regA \regB \regC} \left( \widetilde{U}_{\regA \regB \regC} \widetilde{V}_{\regA \regB \regC} \right) \ket{\xi}_{\regA \regB \regC} \right|^2\\
& = & \left|\sum_{i,j,k \in [d_{\regA}]} 
\bra{\eta}_{\regA \regB \regC}
\left(\ketbra{i}{k}_{\regA}\otimes (U_{i,j})_{\regB}\otimes (V_{j,k})_{\regC}\right)
\ket{\xi}_{\regA \regB \regC}\right|^2.
\end{eqnarray*}

\medskip

\noindent For every $j,k\in[d_{\regA}]$, define
\begin{eqnarray}
T_{j,k}^{\regA\regB}
:=
\sum_{i\in[d_{\regA}]}
\ketbra{i}{k}_{\regA}\otimes (U_{i,j})_{\regB}
\in\linear(\regA\otimes\regB).
\label{eqn:conditional:Tjk}
\end{eqnarray}
The block multiplication above can then be regrouped as
\begin{eqnarray}
\widetilde U_{\regA\regB\regC}
\widetilde V_{\regA\regB\regC}
=
\sum_{j,k\in[d_{\regA}]}
T_{j,k}^{\regA\regB}\otimes(V_{j,k})_{\regC}.
\label{eqn:conditional:blockproduct}
\end{eqnarray}
No inequality has been used in~\Cref{eqn:conditional:blockproduct}; it is only a
regrouping of the matrix blocks.

For $j,k\in[d_{\regA}]$, define the two vectors
\begin{eqnarray}
\ket{\alpha_{j,k}}_{\regA\regB\regC}
&:=&
\left(I_{\regA}\otimes I_{\regB}\otimes
(V_{j,k})_{\regC}^{\dagger}\right)
\ket{\eta}_{\regA\regB\regC},
\label{eqn:conditional:alpha}
\\
\ket{\beta_{j,k}}_{\regA\regB\regC}
&:=&
\left(T_{j,k}^{\regA\regB}\otimes I_{\regC}\right)
\ket{\xi}_{\regA\regB\regC}.
\label{eqn:conditional:beta}
\end{eqnarray}
Because the two displayed operators act on disjoint tensor factors,
\begin{eqnarray*}
\braket{\alpha_{j,k}}{\beta_{j,k}}
&=&
\bra{\eta}
\left(I_{\regA}\otimes I_{\regB}\otimes(V_{j,k})_{\regC}\right)
\left(T_{j,k}^{\regA\regB}\otimes I_{\regC}\right)
\ket{\xi}
\\
&=&
\bra{\eta}
\left(T_{j,k}^{\regA\regB}\otimes(V_{j,k})_{\regC}\right)
\ket{\xi}.
\end{eqnarray*}
Hence~\Cref{eqn:conditional:blockproduct} implies
\begin{eqnarray}
\left|
\bra{\eta}
\widetilde U_{\regA\regB\regC}
\widetilde V_{\regA\regB\regC}
\ket{\xi}
\right|
=
\left|
\sum_{j,k\in[d_{\regA}]}
\braket{\alpha_{j,k}}{\beta_{j,k}}
\right|.
\label{eqn:conditional:matrixelement}
\end{eqnarray}
Applying the ordinary Cauchy--Schwarz inequality to the two families of
vectors gives
\begin{eqnarray}
\left|
\sum_{j,k}
\braket{\alpha_{j,k}}{\beta_{j,k}}
\right|^2
\leq
\left(
\sum_{j,k}
\braket{\alpha_{j,k}}{\alpha_{j,k}}
\right)
\left(
\sum_{j,k}
\braket{\beta_{j,k}}{\beta_{j,k}}
\right).
\label{eqn:conditional:CS}
\end{eqnarray}
We now identify the two sums on the right-hand side.

First,
\begin{eqnarray*}
\sum_{j,k}
\braket{\alpha_{j,k}}{\alpha_{j,k}}
&=&
\bra{\eta}
\left(
I_{\regA}\otimes I_{\regB}\otimes
\sum_{j,k}
(V_{j,k})_{\regC}(V_{j,k})_{\regC}^{\dagger}
\right)
\ket{\eta}.
\end{eqnarray*}
Expanding $V_{\regA\regC}V_{\regA\regC}^{\dagger}$ and tracing out
$\regA$ gives
\begin{eqnarray}
\Tr_{\regA}
\left(
V_{\regA\regC}V_{\regA\regC}^{\dagger}
\right)
=
\sum_{j,k\in[d_{\regA}]}
(V_{j,k})_{\regC}(V_{j,k})_{\regC}^{\dagger}.
\label{eqn:conditional:Vpartial}
\end{eqnarray}
Therefore, because $\ket{\eta}$ is a unit vector,
\begin{eqnarray}
\sum_{j,k}
\braket{\alpha_{j,k}}{\alpha_{j,k}}
\leq
\left\|
\Tr_{\regA}
\left(
V_{\regA\regC}V_{\regA\regC}^{\dagger}
\right)
\right\|_{\infty}.
\label{eqn:conditional:alphabound}
\end{eqnarray}

Second,
\begin{eqnarray*}
\sum_{j,k}
\braket{\beta_{j,k}}{\beta_{j,k}}
&=&
\bra{\xi}
\left(
\sum_{j,k}
(T_{j,k}^{\regA\regB})^{\dagger}T_{j,k}^{\regA\regB}
\otimes I_{\regC}
\right)
\ket{\xi}.
\end{eqnarray*}
From~\Cref{eqn:conditional:Tjk},
\begin{eqnarray*}
(T_{j,k}^{\regA\regB})^{\dagger}T_{j,k}^{\regA\regB}
&=&
\sum_{i\in[d_{\regA}]}
\ketbra{k}{k}_{\regA}
\otimes
(U_{i,j})_{\regB}^{\dagger}(U_{i,j})_{\regB}.
\end{eqnarray*}
Summing over $j,k$ yields
\begin{eqnarray}
\sum_{j,k}
(T_{j,k}^{\regA\regB})^{\dagger}T_{j,k}^{\regA\regB}
=
I_{\regA}\otimes
\sum_{i,j}
(U_{i,j})_{\regB}^{\dagger}(U_{i,j})_{\regB}.
\label{eqn:conditional:Tsum}
\end{eqnarray}
On the other hand, expanding
$U_{\regA\regB}^{\dagger}U_{\regA\regB}$ and tracing out $\regA$ gives
\begin{eqnarray}
\Tr_{\regA}
\left(
U_{\regA\regB}^{\dagger}U_{\regA\regB}
\right)
=
\sum_{i,j\in[d_{\regA}]}
(U_{i,j})_{\regB}^{\dagger}(U_{i,j})_{\regB}.
\label{eqn:conditional:Upartial}
\end{eqnarray}
Combining~\Cref{eqn:conditional:Tsum,eqn:conditional:Upartial} and using that
$\ket{\xi}$ is a unit vector,
\begin{eqnarray}
\sum_{j,k}
\braket{\beta_{j,k}}{\beta_{j,k}}
\leq
\left\|
\Tr_{\regA}
\left(
U_{\regA\regB}^{\dagger}U_{\regA\regB}
\right)
\right\|_{\infty}.
\label{eqn:conditional:betabound}
\end{eqnarray}
Substituting~\Cref{eqn:conditional:alphabound,eqn:conditional:betabound}
into~\Cref{eqn:conditional:CS}, we obtain, for every pair of unit vectors
$\ket{\xi},\ket{\eta}$,
\begin{eqnarray*}
\left|
\bra{\eta}
\widetilde U_{\regA\regB\regC}
\widetilde V_{\regA\regB\regC}
\ket{\xi}
\right|
&\leq&
\sqrt{
\left\|
\Tr_{\regA}
\left(
U_{\regA\regB}^{\dagger}U_{\regA\regB}
\right)
\right\|_{\infty}
}
\\
&&\qquad\cdot
\sqrt{
\left\|
\Tr_{\regA}
\left(
V_{\regA\regC}V_{\regA\regC}^{\dagger}
\right)
\right\|_{\infty}
}.
\end{eqnarray*}
Finally, for every operator $W$,
\[
\|W\|_{\infty}
=
\sup_{\|\xi\|=\|\eta\|=1}
\left|
\bra{\eta}W\ket{\xi}
\right|.
\]
Taking the supremum over the two unit vectors proves
\begin{eqnarray}
\left\|
\widetilde U_{\regA\regB\regC}
\widetilde V_{\regA\regB\regC}
\right\|_{\infty}
\leq
\sqrt{
\left\|
\Tr_{\regA}
\left(
U_{\regA\regB}^{\dagger}U_{\regA\regB}
\right)
\right\|_{\infty}
}
\sqrt{
\left\|
\Tr_{\regA}
\left(
V_{\regA\regC}V_{\regA\regC}^{\dagger}
\right)
\right\|_{\infty}
}.
\label{eqn:conditional:final}
\end{eqnarray}

\end{proof}

\medskip
\noindent We now complete Part 3. The precise quantitative security statement
proved below is
\begin{eqnarray}
\prob[(\alice,\bob,\charlie)\text{ wins}]
\leq
\frac{1}{2}+2^{-(\secparam+1)/2}.
\label{eqn:security:target}
\end{eqnarray}
For the rest of the proof, let
\[
d:=2^{\secparam},
\qquad
\cK_{\secparam}
:=
\left\{
(x,z)\in\{0,1\}^{\secparam}\times\{0,1\}^{\secparam}
:
x_1=1
\right\},
\qquad
L:=|\cK_{\secparam}|
=
2^{2\secparam-1}
=
\frac{d^2}{2}.
\]
For $k=(x,z)\in\cK_{\secparam}$, use the shorthand
\begin{eqnarray}
\Theta_k:=(P_{x,z}^{T})_{\regA},
\qquad
B_k:=\Delta_{\bob}^{(k)}\in\linear(\regB),
\qquad
C_k:=\Delta_{\charlie}^{(k)}\in\linear(\regC).
\label{eqn:security:shorthand}
\end{eqnarray}
The operators $B_k,C_k$ are the decoder observables already defined in Part 2.
Since they arise from binary POVMs, they are self-adjoint contractions:
\begin{eqnarray}
B_k=B_k^{\dagger},
\quad
C_k=C_k^{\dagger},
\quad
-I_{\regB}\leq B_k\leq I_{\regB},
\quad
-I_{\regC}\leq C_k\leq I_{\regC}.
\label{eqn:security:decodercontractions}
\end{eqnarray}

\noindent We prove some helpful lemmas and propositions below. 

\begin{lemma}[Pauli half-frame properties]
\label{lem:security:paulifram}
For every $k=(x,z)\in\cK_{\secparam}$, $P_{x,z}$ is Hermitian, has trace 0 and $P_{x,z}^2=I$. Moreover, for $k=(x,z)$ and $k'=(x',z')$,
\begin{eqnarray}
\Tr(P_{x,z}P_{x',z'})
=
d\cdot\mathbbm{1}[k=k'].
\label{eqn:security:pauliorthogonality}
\end{eqnarray}
The same statements hold for the reference operators $\Theta_k$.
\end{lemma}
\begin{proof}
For one qubit, the four matrices $i^{xz}X^xZ^z$ are
$I,Z,X,Y$ for $(x,z)=(0,0),(0,1),(1,0),(1,1)$, respectively.
Each is Hermitian and squares to the identity. Tensor products preserve these
properties, so $P_{x,z}=P_{x,z}^{\dagger}$ and $P_{x,z}^2=I$.

Because $x_1=1$, the first local factor is $X$ or $Y$, and therefore has trace
zero. Trace is multiplicative over tensor products, so $\Tr(P_{x,z})=0$.
The one-qubit Paulis are Hilbert-Schmidt orthogonal:
$\Tr(PP')=2\mathbbm{1}[P=P']$. Hence
\begin{eqnarray*}
\Tr(P_{x,z}P_{x',z'})
&=&
\prod_{j=1}^{\secparam}
\Tr(P_{x_j,z_j}P_{x'_j,z'_j})
\\
&=&
2^{\secparam}
\mathbbm{1}[(x,z)=(x',z')]
=
d\mathbbm{1}[k=k'].
\end{eqnarray*}
Every tensor Pauli satisfies $P_{x,z}^{T}=\pm P_{x,z}$, so $P_{x,z}^{T}$ is
also Hermitian with trace 0. Finally,
\[
\Tr(P_{x,z}^{T}P_{x',z'}^{T})
=
\Tr((P_{x',z'}P_{x,z})^{T})
=
\Tr(P_{x',z'}P_{x,z}),
\]
which proves the orthogonality statement for $\Theta_k$.
\end{proof}

Define three averaged operators on
$\regA\otimes\regB\otimes\regC$, displaying every tensor factor:
\begin{eqnarray}
\mathsf{E}_{B}
&:=&
\frac{1}{L}
\sum_{k\in\cK_{\secparam}}
\Theta_k\otimes B_k\otimes I_{\regC},
\label{eqn:security:EB}
\\
\mathsf{E}_{C}
&:=&
\frac{1}{L}
\sum_{k\in\cK_{\secparam}}
\Theta_k\otimes I_{\regB}\otimes C_k,
\label{eqn:security:EC}
\\
\mathsf{E}_{BC}
&:=&
\frac{1}{L}
\sum_{k\in\cK_{\secparam}}
I_{\regA}\otimes B_k\otimes C_k.
\label{eqn:security:EBC}
\end{eqnarray}
The operator $\bfG$ defined in~\Cref{eqn:definingG} can equivalently be written
as
\begin{eqnarray}
\bfG
=
\frac{1}{2}
\left(
\mathsf{E}_{B}
+
\mathsf{E}_{C}
+
\mathsf{E}_{BC}
-
I_{\regA\regB\regC}
\right).
\label{eqn:security:Gcorrelations}
\end{eqnarray}

\begin{proposition}[Endpoint overlap]
\label{prop:security:endpoint}
The endpoint-correlation operators satisfy
\begin{eqnarray}
\left\|
\mathsf{E}_{B}\mathsf{E}_{C}
\right\|_{\infty}
\leq
\frac{d}{L}.
\label{eqn:security:endpoint}
\end{eqnarray}
\end{proposition}
\begin{proof}
Temporarily remove the spectator identity registers and define
\[
U_{\regA\regB}
:=
\frac{1}{L}
\sum_{k\in\cK_{\secparam}}
\Theta_k\otimes B_k,
\qquad
V_{\regA\regC}
:=
\frac{1}{L}
\sum_{k\in\cK_{\secparam}}
\Theta_k\otimes C_k.
\]
Their canonical embeddings in
$\regA\otimes\regB\otimes\regC$ are exactly
$\mathsf{E}_{B}$ and $\mathsf{E}_{C}$.

Using~\Cref{eqn:security:pauliorthogonality} and preserving the order of the
Bob operators,
\begin{eqnarray*}
\Tr_{\regA}
\left(
U_{\regA\regB}^{\dagger}U_{\regA\regB}
\right)
&=&
\frac{1}{L^2}
\sum_{k,k'}
\Tr(\Theta_k\Theta_{k'})
B_kB_{k'}
\\
&=&
\frac{d}{L^2}
\sum_k B_k^2
\leq
\frac{d}{L}I_{\regB}.
\end{eqnarray*}
\noindent $B_k^2 \leq I_{\regB}$ follows from the fact that $B_k$ is Hermitian and moreover, $-I_{\regB} \leq B_k \leq I_{\regB}$. (from~\Cref{eqn:security:decodercontractions}). 
\par Similarly,
\[
\Tr_{\regA}
\left(
V_{\regA\regC}V_{\regA\regC}^{\dagger}
\right)
=
\frac{d}{L^2}
\sum_k C_k^2
\leq
\frac{d}{L}I_{\regC}.
\]
Applying the conditional-overlap lemma (Lemma~\ref{lem:conditionaloverlap}) proved above gives
\[
\left\|
\mathsf{E}_{B}\mathsf{E}_{C}
\right\|_{\infty}
\leq
\sqrt{\frac{d}{L}}
\sqrt{\frac{d}{L}}
=
\frac{d}{L}.
\]
\end{proof}

\begin{proposition}[Propagation through agreement moments]
\label{prop:security:moments}
For every integer $\ell\geq0$,
\begin{eqnarray}
\left\|
\mathsf{E}_{B}
\mathsf{E}_{BC}^{\ell}
\mathsf{E}_{C}
\right\|_{\infty}
\leq
\left\|
\mathsf{E}_{B}
\mathsf{E}_{C}
\right\|_{\infty}
\leq
\frac{d}{L}.
\label{eqn:security:moments}
\end{eqnarray}
\end{proposition}
\begin{proof}
For $\ell\geq1$, direct expansion while preserving the order inside each
recipient register gives
\begin{eqnarray}
\mathsf{E}_{B}\mathsf{E}_{BC}^{\ell}\mathsf{E}_{C}
&=&
\frac{1}{L^{\ell}}
\sum_{k_1,\ldots,k_{\ell}\in\cK_{\secparam}}
\left(
I_{\regA}\otimes I_{\regB} \otimes
C_{k_1}\cdots C_{k_{\ell}}
\right)
\left(
\mathsf{E}_{B}\mathsf{E}_{C}
\right)
\nonumber\\
&&\qquad\cdot
\left(
I_{\regA} \otimes
B_{k_1}\cdots B_{k_{\ell}} \otimes I_{\regC}
\right).
\label{eqn:security:momentidentity}
\end{eqnarray}
For ${\bf k}=(k_1,\ldots,k_{\ell})$, define $W_{{\bf k},B}$ and $W_{{\bf k},C}$ as follows: 
$$W_{{\bf k},B} = \left(
I_{\regA}\otimes
B_{k_1}\cdots B_{k_{\ell}}
\otimes I_{\regC}
\right)$$
$$W_{{\bf k},C} = \left(
I_{\regA}\otimes I_{\regB}\otimes
C_{k_1}\cdots C_{k_{\ell}}
\right)$$
We upper bound the operator norm of the above matrices below. 
$$\| W_{{\bf k},B} \|_{\infty} \leq \| I_{\regA} \|_{\infty} \cdot \| B_{k_1}\cdots B_{k_{\ell}} \|_{\infty} \cdot \| I_{\regC} \|_{\infty} \leq \prod_{i=1}^{\ell} \| B_{k_i} \|_{\infty} \leq 1  $$

\noindent The last inequality follows from the fact that for every $i$, $B_{k_i}$ has operator norm at most 1 (from \Cref{eqn:security:decodercontractions}). Similarly, we have $\| W_{{\bf k},C} \|_{\infty} \leq 1$. 
\par Let us revisit~\Cref{eqn:security:momentidentity} and take the operator norm on both the sides:
\begin{eqnarray*}
\| \mathsf{E}_{B}\mathsf{E}_{BC}^{\ell}\mathsf{E}_{C} \|_{\infty}
&=& \left\| \frac{1}{L^{\ell}} \sum_{k_1,\ldots,k_{\ell} \in \cK_{\secparam}} W_{{\bf k},C} \mathsf{E}_B \mathsf{E}_C W_{{\bf k},B} \right\|_{\infty} \\ 
& \leq & \frac{1}{L^{\ell}} \sum_{k_1,\ldots,k_{\ell} \in \cK_{\secparam}} \left\| W_{{\bf k},C} \mathsf{E}_B \mathsf{E}_C W_{{\bf k},B} \right\|_{\infty} \\
& \leq & \frac{1}{L^{\ell}} \sum_{k_1,\ldots,k_{\ell} \in \cK_{\secparam}} \left\| W_{{\bf k},C} \right\|_{\infty}  \left\| \mathsf{E}_B \mathsf{E}_C \right\|_{\infty} \left\| W_{{\bf k},B} \right\|_{\infty} \\
& \leq & \frac{1}{L^{\ell}} \sum_{k_1,\ldots,k_{\ell} \in \cK_{\secparam}} \left\| \mathsf{E}_B \mathsf{E}_C \right\|_{\infty} \\
& = & \left\| \mathsf{E}_B \mathsf{E}_C \right\|_{\infty}
\end{eqnarray*}

\end{proof}

\begin{proposition}[Four positive correctness-pattern effects]
\label{prop:security:positiveeffects}
For every $u,v\in\{+1,-1\}$, define
\begin{eqnarray}
\Gamma^{u,v}
:=
\frac{1}{4}
\left(
I_{\regA\regB\regC}
+
u\mathsf{E}_{B}
+
v\mathsf{E}_{C}
+
uv\mathsf{E}_{BC}
\right).
\label{eqn:security:Gamma}
\end{eqnarray}
Then $\Gamma^{u,v}\geq0$ and
$\sum_{u,v\in\{\pm1\}}\Gamma^{u,v}=I_{\regA\regB\regC}$.
\end{proposition}
\begin{proof}
For a fixed key $k$, define
\[
W_{k,B}
:=
\Theta_k\otimes B_k\otimes I_{\regC},
\qquad
W_{k,C}
:=
\Theta_k\otimes I_{\regB}\otimes C_k.
\]
These are Hermitian operators and with operator norm at most 1. They commute, because
\[
W_{k,B}W_{k,C}
=
\Theta_k^2\otimes B_k\otimes C_k
=
I_{\regA}\otimes B_k\otimes C_k
=
W_{k,C}W_{k,B}.
\]
Thus $I+uW_{k,B}$ and $I+vW_{k,C}$ are commuting positive semi-definite operators (which means that they can be simultaneously diagonalized), so
\[
\Gamma_k^{u,v}
:=
\frac{1}{4}
(I+uW_{k,B})(I+vW_{k,C})
\geq0.
\]
Expanding and averaging uniformly over $k$ gives
\Cref{eqn:security:Gamma}. Averaging preserves positivity. Summing the four
choices of $(u,v)$ cancels every term containing $u$ or $v$ and leaves the
identity.
\end{proof}

\paragraph{How the remaining argument fits together.}
Let $t=\|\bfG_{+}\|_{\infty}$. The proof below has two halves, joined by one
{\em filtered} overlap. First, a unit eigenvector of $\bfG$ with eigenvalue $t$, the
four positive effects, and a weighted polarization identity imply
\[
t^2
\leq
\operatorname{Re}
\bra{\psi}\mathsf{E}_{B}\sfF\mathsf{E}_{C}\ket{\psi}
\]
for a positive filter $\sfF$ (to be defined later). Second, the agreement-moment bounds from
Proposition~\ref{prop:security:moments} imply
\[
\left\|\mathsf{E}_{B}\sfF\mathsf{E}_{C}\right\|_{\infty}
\leq
\frac{d}{L}.
\]
Combining these two estimates gives
\[
t^2
\leq
\operatorname{Re}
\bra{\psi}\mathsf{E}_{B}\sfF\mathsf{E}_{C}\ket{\psi}
\leq
\left\|\mathsf{E}_{B}\sfF\mathsf{E}_{C}\right\|_{\infty}
\leq
\frac{d}{L}.
\]
Every auxiliary operator introduced below is chosen to establish one of these
three inequalities.

\begin{proposition}[Positive-part bound]
\label{prop:security:positivepart}
Let $\gamma:=d/L$. Then
\begin{eqnarray}
\|\bfG_{+}\|_{\infty}^{2}\leq\gamma.
\label{eqn:security:positivepart}
\end{eqnarray}
\end{proposition}
\begin{proof}
Set
\[
t:=\|\bfG_{+}\|_{\infty}.
\]
If $t=0$, there is nothing to prove. Suppose $t>0$. Since the Hilbert space is
finite-dimensional and $\bfG$ is Hermitian, there is a unit vector\footnote{Without the finite-dimensional and Hermitian conditions, such a vector $\ket{\psi}_{\regA \regB \regC}$ need not exist.}
$\ket{\psi}_{\regA\regB\regC}$ such that
\begin{eqnarray}
\bfG\ket{\psi}=t\ket{\psi}.
\label{eqn:security:topeigenvector}
\end{eqnarray}

We first extract the order information contained in the four positive effects.
From~\Cref{eqn:security:Gamma},
\[
\Gamma^{+,+}+\Gamma^{+,-}=\frac{1}{2}(I+\mathsf E_B),
\qquad
\Gamma^{-,+}+\Gamma^{-,-}=\frac{1}{2}(I-\mathsf E_B),
\]
so $-I\leq\mathsf E_B\leq I$. Similarly,
\[
\Gamma^{+,+}+\Gamma^{-,+}=\frac{1}{2}(I+\mathsf E_C),
\qquad
\Gamma^{+,-}+\Gamma^{-,-}=\frac{1}{2}(I-\mathsf E_C),
\]
and
\[
\Gamma^{+,+}+\Gamma^{-,-}=\frac{1}{2}(I+\mathsf E_{BC}),
\qquad
\Gamma^{+,-}+\Gamma^{-,+}=\frac{1}{2}(I-\mathsf E_{BC}).
\]
The above observations combined with Proposition~\ref{prop:security:positiveeffects}, we have that $\mathsf E_B$, $\mathsf E_C$, and $\mathsf E_{BC}$ are Hermitian with operator norm at most 1. Define the following operators:
\begin{eqnarray}
\sfD:=I-\mathsf{E}_{BC},
\qquad
\sfR:=\mathsf{E}_{B}-\mathsf{E}_{C}.
\label{eqn:security:DandR}
\end{eqnarray}
Fomr the above definition of $\sfD$, we have $\frac{1}{2}\sfD=\Gamma^{+,-} + \Gamma^{-,+} \geq0$. That is, $\sfD$ is positive semi-definite. Moreover,
\[
4\Gamma^{+,-}=\sfD+\sfR,
\qquad
4\Gamma^{-,+}=\sfD-\sfR.
\]
From Proposition~\ref{prop:security:positiveeffects}, recall that 
$\Gamma^{+,-} \geq 0$ and $\Gamma^{-,+} \geq 0$. Thus, we have: 
\begin{eqnarray}
-\sfD\leq\sfR\leq\sfD.
\label{eqn:security:orderinterval}
\end{eqnarray}

\noindent Using~\Cref{eqn:security:Gcorrelations}, the eigenvector equation becomes
\begin{eqnarray}
(\mathsf{E}_{B}+\mathsf{E}_{C})\ket{\psi}
=
(2tI+\sfD)\ket{\psi}.
\label{eqn:security:sumeigenvector}
\end{eqnarray}
\noindent Since $\sfD$ is a positive semi-definite matrix, we can consider the spectral decomposition of $\sfD$ as follows:
\[
\sfD=\sum_j d_j\ketbra{v_j}{v_j},
\qquad d_j\geq0.
\]
Then we can define the positive filter $\sfF$ to be: 
\begin{eqnarray}
\sfF:=2t(2tI + \sfD)^{-1}=\sum_j\frac{2t}{2t+d_j}\ketbra{v_j}{v_j}.
\label{eqn:security:filter}
\end{eqnarray}
Every eigenvalue $2t/(2t+d_j)$ lies in $(0,1]$, so $0<\sfF\leq I$.
The same diagonal representation gives
\begin{eqnarray}
(2tI+\sfD)\sfF=\sfF(2tI+\sfD)=2tI.
\label{eqn:security:filterproperties}
\end{eqnarray}
In particular, $\sfF$ commutes with $\sfD$ and with $\mathsf E_{BC}=I-\sfD$; no commutation with $\mathsf E_B$ or $\mathsf E_C$ is asserted.

For a vector $\ket{w}$, define the nonnegative quadratic form
$Q_{\sfF}(w):=\bra{w}\sfF\ket{w}$. Expanding the sum gives
\begin{align*}
Q_{\sfF}((\mathsf E_B+\mathsf E_C)\psi)
={}&\bra{\psi}\mathsf E_B\sfF\mathsf E_B\ket{\psi}
+\bra{\psi}\mathsf E_B\sfF\mathsf E_C\ket{\psi}\\
&+\bra{\psi}\mathsf E_C\sfF\mathsf E_B\ket{\psi}
+\bra{\psi}\mathsf E_C\sfF\mathsf E_C\ket{\psi},
\end{align*}
whereas
\begin{align*}
Q_{\sfF}((\mathsf E_B-\mathsf E_C)\psi)
={}&\bra{\psi}\mathsf E_B\sfF\mathsf E_B\ket{\psi}
-\bra{\psi}\mathsf E_B\sfF\mathsf E_C\ket{\psi}\\
&-\bra{\psi}\mathsf E_C\sfF\mathsf E_B\ket{\psi}
+\bra{\psi}\mathsf E_C\sfF\mathsf E_C\ket{\psi}.
\end{align*}
The diagonal terms cancel upon subtraction. Since $\mathsf E_B$, $\mathsf E_C$, and $\sfF$ are self-adjoint,
\[
\bra{\psi}\mathsf E_C\sfF\mathsf E_B\ket{\psi}
=\overline{\bra{\psi}\mathsf E_B\sfF\mathsf E_C\ket{\psi}}.
\]
Therefore
\begin{eqnarray}
Q_{\sfF}((\mathsf{E}_{B}+\mathsf{E}_{C})\psi)
-
Q_{\sfF}((\mathsf{E}_{B}-\mathsf{E}_{C})\psi)
=
4\operatorname{Re}\bra{\psi}\mathsf{E}_{B}\sfF\mathsf{E}_{C}\ket{\psi}.
\label{eqn:security:polarization}
\end{eqnarray}
Using~\Cref{eqn:security:sumeigenvector,eqn:security:filterproperties},
\begin{eqnarray}
Q_{\sfF}((\mathsf{E}_{B}+\mathsf{E}_{C})\psi)
=
4t^2
+
2t\bra{\psi}\sfD\ket{\psi}.
\label{eqn:security:sumquadratic}
\end{eqnarray}

\noindent We use the following quadratic estimate.

\begin{claim}[Quadratic sandwich]
\label{clm:security:quadraticsandwich}
From~\Cref{eqn:security:DandR} and~\Cref{eqn:security:orderinterval}, 
$\sfD\geq0, 
\sfR=\sfR^\dagger, 
-\sfD\leq\sfR\leq\sfD$ 
and moreover, \(t>0\). This implies:
\[
\sfR\sfF\sfR\leq2t\sfD.
\]
\end{claim}

\begin{proof}
Let \(\ket{\xi}\) and \(\ket{\eta}\) be arbitrary. Since
\[
\sfD+\sfR\geq0
\qquad\text{and}\qquad
\sfD-\sfR\geq0,
\]
we have
\begin{align*}
0\leq{}&
\frac12
(\bra{\xi}+\bra{\eta})
(\sfD+\sfR)
(\ket{\xi}+\ket{\eta})
+
\frac12
(\bra{\xi}-\bra{\eta})
(\sfD-\sfR)
(\ket{\xi}-\ket{\eta})\\
={}&
\bra{\xi}\sfD\ket{\xi}
+
2\operatorname{Re}\bra{\xi}\sfR\ket{\eta}
+
\bra{\eta}\sfD\ket{\eta}.
\end{align*}
Adding \(2t\|\eta\|^2\geq0\) gives
\begin{equation}
\bra{\xi}\sfD\ket{\xi}
+
2\operatorname{Re}\bra{\xi}\sfR\ket{\eta}
+
\bra{\eta}(\sfD+2tI)\ket{\eta}
\geq0.
\label{eqn:security:quadratic-sandwich-intermediate}
\end{equation}

For a fixed \(\ket{\xi}\), choose
\[
\ket{\eta}
=
-(\sfD+2tI)^{-1}\sfR\ket{\xi}.
\]
Substituting this choice into
\Cref{eqn:security:quadratic-sandwich-intermediate}, the middle
term becomes\footnote{Since $\sfR (\sfD + 2tI)^{-1} \sfR$ is Hermitian, $\bra{\xi}
\sfR(\sfD+2tI)^{-1}\sfR
\ket{\xi}$ is real and hence we can omit the usage of ${\sf Re}$.}
\[
-2
\bra{\xi}
\sfR(\sfD+2tI)^{-1}\sfR
\ket{\xi},
\]
while the final term becomes
\[
\bra{\xi}
\sfR(\sfD+2tI)^{-1}\sfR
\ket{\xi}.
\]
Therefore,
\[
0
\leq
\bra{\xi}
\left(
\sfD-
\sfR(\sfD+2tI)^{-1}\sfR
\right)
\ket{\xi}.
\]
Since this holds for every \(\ket{\xi}\),
\[
\sfR(\sfD+2tI)^{-1}\sfR
\leq
\sfD.
\]
Multiplying by \(2t>0\) and using
\[
\sfF=2t(\sfD+2tI)^{-1}
\]
gives
\[
\sfR\sfF\sfR\leq2t\sfD.
\]
\end{proof}

\noindent Consequently,
\[
Q_{\sfF}((\mathsf{E}_{B}-\mathsf{E}_{C})\psi)
\leq
2t\bra{\psi}\sfD\ket{\psi}.
\]
Subtracting this from~\Cref{eqn:security:sumquadratic} and using
\Cref{eqn:security:polarization},
\begin{eqnarray}
t^2
\leq
\operatorname{Re}
\bra{\psi}
\mathsf{E}_{B}\sfF\mathsf{E}_{C}
\ket{\psi}.
\label{eqn:security:filteredlower}
\end{eqnarray}

\noindent It remains to upper-bound the same filtered overlap. Put
$q:=1/(1+2t)\in(0,1)$. Since
$\sfD=I-\mathsf{E}_{BC}$,
\[
\sfF
=
(1-q)(I-q\mathsf{E}_{BC})^{-1}.
\]
\noindent \underline{\textsc{Analyzing $F_{N_0}$:}} For an integer $N_0\geq1$, define
\begin{eqnarray}
\sfF_{N_0}
:=
(1-q)
\sum_{\ell=0}^{N_0-1}
q^{\ell}\mathsf{E}_{BC}^{\ell}.
\label{eqn:security:finitefilter}
\end{eqnarray}
By the definition of $\sfF_{N_0}$,
\[
\mathsf E_B\sfF_{N_0}\mathsf E_C
=(1-q)\sum_{\ell=0}^{N_0-1}q^{\ell}\mathsf E_B\mathsf E_{BC}^{\ell}\mathsf E_C.
\]
The coefficients are nonnegative, so the triangle inequality and Proposition~\ref{prop:security:moments} give
\begin{eqnarray*}
\left\|\mathsf E_B\sfF_{N_0}\mathsf E_C\right\|_{\infty}
&\leq & (1-q)\sum_{\ell=0}^{N_0-1}q^{\ell}\left\|\mathsf E_B\mathsf E_{BC}^{\ell}\mathsf E_C\right\|_{\infty}\\
&\leq & (1-q)\sum_{\ell=0}^{N_0-1}q^{\ell}\gamma \\
& = & (1-q) \cdot \frac{(1- q^{N_0})}{1-q} \cdot \gamma \\
& = & (1-q^{N_0})\gamma.
\end{eqnarray*}
Thus
\begin{eqnarray}
\left\|\mathsf{E}_{B}\sfF_{N_0}\mathsf{E}_{C}\right\|_{\infty}
\leq(1-q^{N_0})\gamma.
\label{eqn:security:finitefilterbound}
\end{eqnarray}

\noindent We can alternately define $\sfF_{N_0}$ as follows: First consider the following telescoping sum identity:
\[
(I-q\mathsf E_{BC})\sum_{\ell=0}^{N_0-1}q^{\ell}\mathsf E_{BC}^{\ell}
=I-q^{N_0}\mathsf E_{BC}^{N_0}
\]
Multiplying by $(1-q)(I-q\mathsf E_{BC})^{-1}$ on both the sides. On the left hand side, the result is $\sfF_{N_0}$. Thus, we have: 
$$\mathsf{F}_{N_0} = (1-q)(I-q\mathsf E_{BC})^{-1}(I-q^{N_0}\mathsf E_{BC}^{N_0}) $$

\noindent \underline{\textsc{Upper bounding $\left\|
\mathsf{E}_{B}\sfF\mathsf{E}_{C}
\right\|_{\infty}$:}} Using $\sfF=(1-q)(I-q\mathsf E_{BC})^{-1}$ gives
\begin{eqnarray*}
\sfF-\sfF_{N_0} & = & (1-q)(I-q\mathsf E_{BC})^{-1} - (1-q)(I-q\mathsf E_{BC})^{-1}(I-q^{N_0}\mathsf E_{BC}^{N_0}) \\
& = & (1-q)(I-q\mathsf E_{BC})^{-1} q^{N_0}\mathsf E_{BC}^{N_0}\\
& = & q^{N_0} \sfF {\mathsf E}_{BC}^{N_0}
\end{eqnarray*}
Note that $(I - q{\mathsf E}_{BC})$ and $\mathsf{E}_{BC}$ commute. Hence, $(I - q{\mathsf E}_{BC})^{-1}$ and $\mathsf{E}_{BC}$ commute as well which further implies that $\sfF$ and $\mathsf{E}_{BC}$ commute. Thus, 
\begin{eqnarray}
\sfF-\sfF_{N_0}=q^{N_0}\mathsf{E}_{BC}^{N_0}\sfF.
\label{eqn:security:filterremainder}
\end{eqnarray}
Since all the four operators $\mathsf{E}_B$, $\mathsf{E}_{BC}^{N_0}$, $\sfF$ and $\mathsf{E}_C$ are contractions, we have the following: 
\begin{eqnarray*}
\left\|
\mathsf{E}_{B}(\sfF-\sfF_{N_0})\mathsf{E}_{C}
\right\|_{\infty} & = & \left\|
\mathsf{E}_{B}(q^{N_0} \mathsf{E}_{BC}^{N_0} \sfF)\mathsf{E}_{C}
\right\|_{\infty}
\\
& \leq & q^{N_0} \left\| \mathsf{E}_{B} \right\|_{\infty} \cdot \| \mathsf{E}_{BC}^{N_0} \|_{\infty} \cdot \| \sfF \|_{\infty} \cdot \| \mathsf{E}_{C} \|_{\infty}  \\
& \leq & q^{N_0}.
\end{eqnarray*}
Therefore
\begin{eqnarray*}
\left\|
\mathsf{E}_{B}\sfF\mathsf{E}_{C}
\right\|_{\infty} & \leq & \left\|
\mathsf{E}_{B}(\sfF - \sfF_{N_0})\mathsf{E}_{C}\|_{\infty} + \|\mathsf{E}_{B}\sfF_{N_0}\mathsf{E}_{C}
\right\|_{\infty} \\
& \leq &
(1-q^{N_0})\gamma+q^{N_0}.
\end{eqnarray*}

\noindent Since $q \in (0,1)$, $N_0\to\infty$ gives
\begin{eqnarray}
\left\|
\mathsf{E}_{B}\sfF\mathsf{E}_{C}
\right\|_{\infty}
\leq
\gamma.
\label{eqn:security:filteredupper}
\end{eqnarray}
Finally,
\[
t^2
\leq
\operatorname{Re}
\bra{\psi}
\mathsf{E}_{B}\sfF\mathsf{E}_{C}
\ket{\psi}
\leq
\left\|
\mathsf{E}_{B}\sfF\mathsf{E}_{C}
\right\|_{\infty}
\leq
\gamma.
\]
Since $t=\|\bfG_{+}\|_{\infty}$, this proves the proposition.
\end{proof}

By Propositions~\ref{prop:security:moments}--\ref{prop:security:positivepart},
\[
\|\bfG_{+}\|_{\infty}
\leq
\sqrt{\frac{d}{L}}
=
\sqrt{\frac{2}{d}}.
\]
Substituting this into the bound at the end of Part 2 gives
\begin{eqnarray*}
p
&\leq&
\frac{1}{2}
+
\frac{1}{2}
\sqrt{\frac{2}{d}}
\\
&=&
\frac{1}{2}
+
2^{-(\secparam+1)/2},
\end{eqnarray*}
where the last equality uses $d=2^{\secparam}$. This proves
\Cref{eqn:security:target} and completes the proof.

\end{proof}

\newpage 

\printbibliography

@article{WoottersZurek82,
  author  = {Wootters, William K. and Zurek, Wojciech H.},
  title   = {A Single Quantum Cannot Be Cloned},
  journal = {Nature},
  volume  = {299},
  pages   = {802--803},
  year    = {1982},
  doi     = {10.1038/299802a0}
}

@online{OpenAICDCPrompt,
  author  = {{OpenAI}},
  title   = {Prompt Used for ``A Proof of the Cycle Double Cover Conjecture''},
  date    = {2026-07-09},
  url     = {https://cdn.openai.com/pdf/04d1d1e4-bc75-476a-97cf-49055cd98d31/cdc_prompt.pdf},
  urldate = {2026-07-22}
}

@misc{MoonshotHarness,
  author = {
    Junyi Zhang\textsuperscript{*} and
    Xinjie He\textsuperscript{*} and
    Hyunsik Chae and
    Ethan Ji and
    Eric Jiang and
    Rushil Raghavan and
    Yiwen Kou and
    Alex Taylor and
    Kai-Wei Chang\textsuperscript{\textdagger} and
    Raghu Meka\textsuperscript{\textdagger} and
    Violet Peng\textsuperscript{\textdagger} and
    Amit Sahai\textsuperscript{\textdagger} and
    Terence Tao\textsuperscript{\textdagger} and
    Wei Wang\textsuperscript{\textdagger}
  },
  title     = {UCLA Moonshot Harness},
  year      = {2026},
  shorthand = {ZHC+26},
  note      = {
    \textsuperscript{*}Co-first authors with equal contribution.
    The remaining students are ordered by contribution to the harness.
    \textsuperscript{\textdagger}Principal investigators, listed at the end
    in alphabetical order by last name.
  }
}

@article{Dieks82,
  author  = {Dieks, Dennis},
  title   = {Communication by {EPR} Devices},
  journal = {Physics Letters A},
  volume  = {92},
  number  = {6},
  pages   = {271--272},
  year    = {1982},
  doi     = {10.1016/0375-9601(82)90084-6}
}

@article{Gottesman03,
  author  = {Gottesman, Daniel},
  title   = {Uncloneable Encryption},
  journal = {Quantum Information and Computation},
  volume  = {3},
  number  = {6},
  pages   = {581--602},
  year    = {2003},
  doi     = {10.26421/QIC3.6-2}
}

@inproceedings{BroadbentLord20,
  author    = {Broadbent, Anne and Lord, S{\'e}bastien},
  title     = {Uncloneable Quantum Encryption via Oracles},
  booktitle = {15th Conference on the Theory of Quantum Computation,
               Communication and Cryptography (TQC 2020)},
  series    = {Leibniz International Proceedings in Informatics (LIPIcs)},
  volume    = {158},
  pages     = {4:1--4:22},
  publisher = {Schloss Dagstuhl--Leibniz-Zentrum f{\"u}r Informatik},
  year      = {2020},
  doi       = {10.4230/LIPIcs.TQC.2020.4}
}

@inproceedings{AnanthKaleoglu21,
  author    = {Ananth, Prabhanjan and Kaleoglu, Fatih},
  title     = {Unclonable Encryption, Revisited},
  booktitle = {Theory of Cryptography},
  series    = {Lecture Notes in Computer Science},
  volume    = {13042},
  pages     = {299--329},
  publisher = {Springer},
  year      = {2021},
  doi       = {10.1007/978-3-030-90459-3_11}
}

@inproceedings{AnanthEtAl22,
  author    = {Ananth, Prabhanjan and Kaleoglu, Fatih and Li, Xingjian
               and Liu, Qipeng and Zhandry, Mark},
  title     = {On the Feasibility of Unclonable Encryption, and More},
  booktitle = {Advances in Cryptology---CRYPTO 2022, Part II},
  series    = {Lecture Notes in Computer Science},
  volume    = {13508},
  pages     = {212--241},
  publisher = {Springer},
  year      = {2022},
  doi       = {10.1007/978-3-031-15979-4_8}
}

@inproceedings{AnanthKaleogluLiu23,
  author    = {Ananth, Prabhanjan and Kaleoglu, Fatih and Liu, Qipeng},
  title     = {Cloning Games: A General Framework for Unclonable Primitives},
  booktitle = {Advances in Cryptology---CRYPTO 2023, Part V},
  series    = {Lecture Notes in Computer Science},
  volume    = {14085},
  pages     = {66--98},
  publisher = {Springer},
  year      = {2023},
  doi       = {10.1007/978-3-031-38554-4_3}
}

@inproceedings{AnanthKaleogluYuen25,
  author    = {Ananth, Prabhanjan and Kaleoglu, Fatih and Yuen, Henry},
  title     = {Simultaneous {Haar} Indistinguishability with Applications
               to Unclonable Cryptography},
  booktitle = {16th Innovations in Theoretical Computer Science Conference
               (ITCS 2025)},
  series    = {Leibniz International Proceedings in Informatics (LIPIcs)},
  volume    = {325},
  pages     = {7:1--7:23},
  publisher = {Schloss Dagstuhl--Leibniz-Zentrum f{\"u}r Informatik},
  year      = {2025},
  doi       = {10.4230/LIPIcs.ITCS.2025.7}
}

@misc{ColadangeloLiuXie26,
  author        = {Coladangelo, Andrea and Liu, Qipeng and Xie, Ziyi},
  title         = {On the Security of {XOR} Repetition for Unclonable Encryption},
  note          = {Preprint},
  year          = {2025}
}

@article{BhattacharyyaCulf26,
  author  = {Bhattacharyya, Archishna and Culf, Eric},
  title   = {Uncloneable Encryption from Decoupling},
  journal = {Nature Physics},
  volume  = {22},
  number  = {2},
  pages   = {315--318},
  year    = {2026},
  doi     = {10.1038/s41567-025-03154-7}
}

@article{BotteronEtAl26,
  author  = {Botteron, Pierre and Broadbent, Anne and Culf, Eric
             and Nechita, Ion and Pellegrini, Cl{\'e}ment
             and Rochette, Denis},
  title   = {Towards Unconditional Uncloneable Encryption},
  journal = {Quantum},
  volume  = {10},
  pages   = {2157},
  year    = {2026},
  doi     = {10.22331/q-2026-07-08-2157}
}

@misc{BhattacharyyaBroadbentCulf26,
  author        = {Bhattacharyya, Archishna and Broadbent, Anne
                   and Culf, Eric},
  title         = {The Uncloneable Bit Exists},
  year          = {2026},
  eprint        = {2603.08916},
  archivePrefix = {arXiv},
  primaryClass  = {quant-ph}
}

@book{Watrous18,
  author    = {Watrous, John},
  title     = {The Theory of Quantum Information},
  publisher = {Cambridge University Press},
  year      = {2018},
  doi       = {10.1017/9781316848142}
}

@article{Jamiolkowski72,
  author  = {Jamio{\l}kowski, Andrzej},
  title   = {Linear Transformations Which Preserve Trace and Positive
             Semidefiniteness of Operators},
  journal = {Reports on Mathematical Physics},
  volume  = {3},
  number  = {4},
  pages   = {275--278},
  year    = {1972},
  doi     = {10.1016/0034-4877(72)90011-0}
}

@article{Choi75,
  author  = {Choi, Man-Duen},
  title   = {Completely Positive Linear Maps on Complex Matrices},
  journal = {Linear Algebra and its Applications},
  volume  = {10},
  number  = {3},
  pages   = {285--290},
  year    = {1975},
  doi     = {10.1016/0024-3795(75)90075-0}
}

@article{BartusekGoldin2026,
  title={Unclonable Encryption in the Haar Random Oracle Model},
  author={Bartusek, James and Goldin, Eli},
  journal={arXiv preprint arXiv:2603.11437},
  year={2026}
}

@article{MajenzSchaffnerTahmasbi2021,
  title={Limitations on uncloneable encryption and simultaneous one-way-to-hiding},
  author={Majenz, Christian and Schaffner, Christian and Tahmasbi, Mehrdad},
  journal={arXiv preprint arXiv:2103.14510},
  year={2021}
}

\newpage

\appendix

\section{Glossary of notation}

{\small
\renewcommand{\arraystretch}{1.35}
\begin{longtable}{
|>{\centering\arraybackslash}p{0.15\textwidth}
|>{\raggedright\arraybackslash}p{0.46\textwidth}
|>{\raggedright\arraybackslash}p{0.30\textwidth}|
}
\hline
\textbf{Symbol}
&
\textbf{Formal definition}
&
\textbf{Informal meaning}
\\
\hline
\endfirsthead

\hline
\textbf{Symbol}
&
\textbf{Formal definition}
&
\textbf{Informal meaning}
\\
\hline
\endhead

\(\secparam\)
&
The security parameter.
&
The number of qubits in the ciphertext.
\\
\hline

\(d\)
&
\(\displaystyle d:=2^{\secparam}\).
&
Dimension of the ciphertext Hilbert space.
\\
\hline

\(m\)
&
\(\displaystyle m\in\{0,1\}\).
&
The plaintext bit.
\\
\hline

\(\cK_{\secparam}\)
&
\(\displaystyle
\cK_{\secparam}
:=
\left\{
(x,z)\in\{0,1\}^{\secparam}\times
\{0,1\}^{\secparam}
:
x_1=1
\right\}.
\)
&
The set of possible secret keys.
\\
\hline

\(L\)
&
\(\displaystyle
L:=|\cK_{\secparam}|
=
2^{2\secparam-1}
=
\frac{d^2}{2}.
\)
&
The number of possible keys.
\\
\hline

\(k=(x,z)\)
&
A uniformly sampled element of \(\cK_{\secparam}\).
&
The secret key.
\\
\hline

\(P_{x_j,z_j}\)
&
\(\displaystyle
P_{x_j,z_j}
:=
i^{x_jz_j}X^{x_j}Z^{z_j}
\in\{I,X,Y,Z\}.
\)
&
The local Pauli on qubit \(j\).
\\
\hline

\(P_{x,z}\)
&
\(\displaystyle
P_{x,z}
:=
\bigotimes_{j=1}^{\secparam}
P_{x_j,z_j}.
\)
&
The \(n\)-qubit Pauli selected by the key.
\\
\hline

\(q_j\)
&
\(\displaystyle
q_j
:=
\begin{cases}
0,&P_{x_j,z_j}=I,\\
1,&P_{x_j,z_j}\neq I.
\end{cases}
\)
&
Indicates whether the \(j\)-th local Pauli is nonidentity.
\\
\hline

\(U_{x_j,z_j}\)
&
\(\displaystyle
U_{x_j,z_j}
:=
\begin{cases}
I,&x_j=0,\\
H,&x_j=1,\ z_j=0,\\
SH,&x_j=z_j=1.
\end{cases}
\)
&
Local change from the computational basis to the eigenbasis of
\(P_{x_j,z_j}\).
\\
\hline

\(U_k\)
&
\(\displaystyle
U_k
:=
\bigotimes_{j=1}^{\secparam}
U_{x_j,z_j}.
\)
&
The complete product basis-change unitary.
\\
\hline

\(r\)
&
\(\displaystyle
r:=(r_2,\ldots,r_{\secparam})
\in\{0,1\}^{\secparam-1},
\)
sampled uniformly.
&
The private encryption randomness.
\\
\hline

\(r_1\)
&
\(\displaystyle
r_1
:=
m\oplus
\bigoplus_{j=2}^{\secparam}q_jr_j.
\)
&
The first preparation bit, chosen to enforce the required parity.
\\
\hline

\(\ket{\psi_{k,m;r}}\)
&
\(\displaystyle
\ket{\psi_{k,m;r}}
:=
U_k
\ket{r_1,r_2,\ldots,r_{\secparam}}.
\)
&
The pure ciphertext emitted for one fixed private random string.
\\
\hline

\(\rho_{\ct}^{(k,m)}\)
&
\(\displaystyle
\rho_{\ct}^{(k,m)}
:=
\frac{1}{2^{\secparam-1}}
\sum_{r_2,\ldots,r_{\secparam}}
\ketbra{\psi_{k,m;r}}{\psi_{k,m;r}}.
\)
Equivalently,
\(\displaystyle
\rho_{\ct}^{(k,m)}
=
\frac1d
\left(
I+(-1)^mP_{x,z}
\right).
\)
&
The ciphertext density operator after the private randomness is
discarded.
\\
\hline

\(\Phi_{\regE \rightarrow \regB \regC}\)
&
\(\displaystyle
\Phi:
\linear(\regE)
\longrightarrow
\linear(\regB\otimes\regC).
\)
&
The adversary's ciphertext-splitting channel.
\\
\hline

\(\tau_{\regA\regB\regC}\)
&
\(\displaystyle
\tau_{\regA\regB\regC}
:=
(\operatorname{id}_{\regA}\otimes\Phi_{\regE})
\left(
\ketbra{\Omega}{\Omega}_{\regA\regE}
\right).
\)
&
The normalized Choi state of the splitting channel.
\\
\hline

\(\Theta_k\)
&
\(\displaystyle
\Theta_k
:=
(P_{x,z}^{T})_{\regA}.
\)
&
The key-dependent Pauli acting on the Choi reference register.
\\
\hline

\(B_k\)
&
\(\displaystyle
B_k
:=
\Delta_{\bob}^{(k)}
=
\Lambda_{\bob,0}^{(k)}
-
\Lambda_{\bob,1}^{(k)}
\in\linear(\regB).
\)
&
Bob's signed binary-decoder observable for key \(k\).
\\
\hline

\(C_k\)
&
\(\displaystyle
C_k
:=
\Delta_{\charlie}^{(k)}
=
\Lambda_{\charlie,0}^{(k)}
-
\Lambda_{\charlie,1}^{(k)}
\in\linear(\regC).
\)
&
Charlie's signed binary-decoder observable for key \(k\).
\\
\hline

\(\mathsf E_B\)
&
\(\displaystyle
\mathsf E_B
:=
\frac1L
\sum_{k\in\cK_{\secparam}}
\Theta_k\otimes B_k\otimes I_{\regC}.
\)
&
Average reference--Bob correlation.
\\
\hline

\(\mathsf E_C\)
&
\(\displaystyle
\mathsf E_C
:=
\frac1L
\sum_{k\in\cK_{\secparam}}
\Theta_k\otimes I_{\regB}\otimes C_k.
\)
&
Average reference--Charlie correlation.
\\
\hline

\(\mathsf E_{BC}\)
&
\(\displaystyle
\mathsf E_{BC}
:=
\frac1L
\sum_{k\in\cK_{\secparam}}
I_{\regA}\otimes B_k\otimes C_k.
\)
&
Average Bob--Charlie agreement correlation.
\\
\hline

\(\bfG\)
&
\(\displaystyle
\bfG
:=
\frac12
\left(
\mathsf E_B+\mathsf E_C+\mathsf E_{BC}-I
\right).
\)
&
The centered operator whose positive spectrum controls the
winning advantage.
\\
\hline

\(\bfG_+\)
&
The positive part of \(\bfG\).
&
The portion of \(\bfG\) that can increase the success probability
above \(1/2\).
\\
\hline

\(\Gamma^{u,v}\)
&
\(\displaystyle
\Gamma^{u,v}
:=
\frac14
\left(
I+u\mathsf E_B+v\mathsf E_C+uv\mathsf E_{BC}
\right),
\quad u,v\in\{\pm1\}.
\)
&
The four positive Bob/Charlie correctness-pattern operators.
\\
\hline

\(\gamma\)
&
\(\displaystyle
\gamma:=\frac dL=\frac2d.
\)
&
The endpoint and agreement-moment upper bound.
\\
\hline

\(t\)
&
\(\displaystyle
t:=\|\bfG_+\|_\infty.
\)
&
The largest positive eigenvalue of \(\bfG\).
\\
\hline

\(\ket{\psi}\)
&
A unit vector satisfying
\(\displaystyle
\bfG\ket{\psi}=t\ket{\psi}.
\)
&
A top positive-eigenvalue direction of \(\bfG\).
\\
\hline

\(\sfD\)
&
\(\displaystyle
\sfD:=I-\mathsf E_{BC}.
\)
&
The Bob--Charlie disagreement operator.
\\
\hline

\(\sfR\)
&
\(\displaystyle
\sfR:=\mathsf E_B-\mathsf E_C.
\)
&
The difference between Bob's and Charlie's reference correlations.
\\
\hline

\(\sfF\)
&
\(\displaystyle
\sfF
:=
2t(2tI+\sfD)^{-1}.
\)
&
A filter that emphasizes directions where Bob and Charlie agree.
\\
\hline

\(Q_{\sfF}(w)\)
&
\(\displaystyle
Q_{\sfF}(w)
:=
\bra{w}\sfF\ket{w}.
\)
&
The quadratic form induced by the filter \(\sfF\).
\\
\hline

\(q\)
&
\(\displaystyle
q:=\frac{1}{1+2t}\in(0,1).
\)
&
The geometric-series parameter used to approximate \(\sfF\).
\\
\hline

\(\sfF_{N_0}\)
&
\(\displaystyle
\sfF_{N_0}
:=
(1-q)
\sum_{\ell=0}^{N_0-1}
q^\ell\mathsf E_{BC}^{\ell}.
\)
&
A finite polynomial approximation to \(\sfF\).
\\
\hline

\end{longtable}
}

\end{document}